\newcommand{\CondProb}[2]{\mathbb{P}\left[{#1}\,\middle\vert\,{#2}\right]}
\newcommand{\defeq}{\triangleq}
\newcommand{\ints}{\mathbb{Z}}
\newcommand{\Latency}{T}
\newcommand{\nchoosek}[2]{{{#1}\choose{#2}}}
\newcommand{\eg}{e.g.}
\newcommand{\cf}{cf.}
\newcommand{\CondE}[2]{\mathbb{E}\left[\left.\!{#1}\right|#2\right]}
\newcommand{\DLPolicy}{\pi_{\mathrm{DL}}}
\newcommand{\ie}{i.e.}
\newcommand{\Prob}[1]{\mathbb{P}\left[{#1}\right]}
\newcommand{\indicator}[1]{\mathbb{I}\left\{{#1}\right\}}
\newcommand{\nn}{\nonumber}
\newcommand{\fLatency}[1]{T\left({#1}\right)}
\newcommand{\cB}{\mathcal{B}}
\newcommand{\cA}{\mathcal{A}}
\newcommand{\pints}{\mathbb{Z^+}}
\newcommand{\fCloudCost}[1]{\CloudCost\left({#1}\right)}
\newcommand{\abs}[1]{\left|#1\right|}
\newcommand{\CloudCost}{C_\mathrm{loud}}
\newcommand{\EMTradeoff}{\MachineTime-\ExecutionTime{} tradeoff}
\newcommand{\cP}{\mathcal{P}}
\newcommand{\cW}{\mathcal{W}}
\newcommand{\cV}{\mathcal{V}}
\newcommand{\cU}{\mathcal{U}}
\newcommand{\addtikz}[1]{\includegraphics{tpdfs/#1}}
\newcommand{\SLPolicy}{\pi_{\mathrm{SL}}}
\newcommand{\E}[1]{\mathbb{E}\left[{#1}\right]}
\newcommand{\bb}{\mathbf{b}}
\newcommand{\List}[1]{\left[{#1}\right]}
\newcommand{\PLOTDIR}{plots}
\newcommand{\bt}{\mathbf{t}}
\newcommand{\iid}{{i.i.d.}}
\newcommand{\bw}{\mathbf{w}}
\newcommand{\bp}{\mathbf{p}}
\newcommand{\cR}{\mathcal{R}}
\newcommand{\cardinality}[1]{\left|#1\right|}
\newcommand{\nnreals}{\mathbb{R}_+}
\newcommand{\Set}[1]{\left\{{#1}\right\}}
\DeclareMathOperator*{\argmin}{arg\,min}
\newcommand{\etc}{etc.}
\newcommand{\posfunc}[1]{\left\vert{#1}\right\vert^+}
\newcommand{\beiid}{\stackrel{~\iid}{\sim}}
\newcommand{\SetDef}[2]{\left\{#1 \setst #2 \right\}}
\newcommand{\ExecutionTime}{$\E{T}$}
\newcommand{\MachineTime}{$\E{C}$}
\newcommand{\setst}{:}
\newtheorem{theorem}{Theorem}
\newtheorem{definition}{Definition}
\newtheorem{remark}{Remark}
\newaliascnt{axiom}{theorem}
\newaliascnt{lemma}{theorem}
\newtheorem{lemma}[lemma]{Lemma}
\newaliascnt{prop}{theorem}
\newaliascnt{corollary}{theorem}
\newtheorem{corollary}[corollary]{Corollary}
\newaliascnt{conjecture}{theorem}
\newaliascnt{observation}{theorem}
\newaliascnt{algo}{theorem}
\newaliascnt{notation}{definition}
\crefname{equation}{}{}
\Crefname{equation}{}{}
\crefname{thm}{theorem}{theorems}
\Crefname{thm}{Theorem}{Theorems}
\crefname{app}{appendix}{appendices}
\Crefname{app}{Appendix}{Appendices}
\crefname{prop}{proposition}{propositions}
\Crefname{prop}{Proposition}{Propositions}
\crefname{figure}{fig.}{figures}
\Crefname{figure}{Fig.}{Figures}
\crefname{defn}{definition}{definitions}
\Crefname{defn}{Definition}{Definitions}
\crefname{fact}{fact}{facts}
\Crefname{fact}{Fact}{Facts}
\crefname{appendix}{appendix}{appendices}
\Crefname{appendix}{Appendix}{Appendices}
\crefname{algo}{algorithm}{algorithms}
\Crefname{algo}{Algorithm}{Algorithms}
\crefname{algorithm}{algorithm}{algorithms}
\Crefname{algorithm}{Algorithm}{Algorithms}
\crefname{conjecture}{conjecture}{conjectures}
\Crefname{conjecture}{Conjecture}{Conjectures}
\crefname{obs}{observation}{observations}
\Crefname{obs}{Observation}{Observations}
\title{Efficient Task Replication for Fast Response Times\\in Parallel Computation}
\author{
Da Wang, Gauri Joshi, Gregory Wornell
\\
~\\
\small
Signals, Information and Algorithms Laboratory~\\
\small
Research Laboratory of Electronics~\\
\small
Massachusetts Institute of Technology
}
\date{}
\newcommand{\REFNAME}{paper}
\begin{document}
\maketitle

\begin{abstract}
    One typical use case of large-scale distributed computing in data centers is to decompose a
computation job into many independent tasks and run them in parallel on different machines,
sometimes known as ``embarrassingly parallel'' computation.  For this type of computation, one
challenge is that the time to execute a task for each machine is inherently variable, and the
overall response time is constrained by the execution time of the slowest machine.  To address
this issue, system designers introduce task replication, which sends the same task to multiple
machines, and obtains result from the machine that finishes first. While task replication
reduces response time, it usually increases resource usage. In this work, we propose a
theoretical framework to analyze the trade-off between response time and resource usage. We
show that, while in general, there is a tension between response time and resource usage, there
exist scenarios where replicating tasks judiciously reduces completion time and resource
usage simultaneously.  Given the execution time distribution for machines, we investigate the
conditions for a scheduling policy to achieve optimal performance trade-off, and propose
efficient algorithms to search for optimal or near-optimal scheduling policies. Our analysis
gives insights on when and why replication helps, which can be used to guide scheduler design
in large-scale distributed computing systems.

\end{abstract}

\section{Introduction}
\label{sec:intro}
One of the typical scenarios in cloud computing is large scale computation in a data centers
with a large number of computers, which is pioneered by companies like Google with the support
from distributed computing frameworks such as MapReduce~\cite{dean_mapreduce:_2008} and
Percolator~\cite{peng_large-scale_2010} , and distributed storage system such as Google File
System~\cite{ghemawat_google_2003} and BigTable~\cite{chang_bigtable:_2008}. Another canonical
example is the Amazon Web Service, where computing nodes can be obtained in a pay-as-you-go
fashion to accomplish computation at a wide range of scales.

An important category of large scale computation in data center is called ``embarrassingly
parallel'' computation~\cite{wikipedia_embarrassingly_2013}, where the computation can be
easily separated into a number of parallel tasks, often due to no dependency (or communication)
between these parallel tasks.  For an embarrassingly parallel job, we send each of its task to a
separate machine, let each machine execute the task, and collect the results from each machine.
While appears to be simplistic, embarrassingly parallel computation happens (either in part or
in whole) in many non-trivial applications, such as the ``Map'' stage of MapReduce, genetic
algorithms, the tree growth step of random forest, and so on.  In addition, embarrassingly
parallel computation is a feature that algorithm designers seek due to its ease of
implementation, in optimization~\cite{boyd_distributed_2011} and MCMC
simulation~\cite{neiswanger_asymptotically_2013}.

For an embarrassingly parallel job, the completion time is determined by the slowest computing
node, as one needs to wait for all parallel tasks to finish. However, machine response time in
data centers are inherently variable due to factors such as co-hosting, virtualization, network
congestion, misconfiguration, \etc. Then as the computing scale increases, it is increasingly
likely that the slowest machine is going to drag down the job completion time significantly.
For example, \cite[Table 1]{dean_tail_2013} shows that while the 99\%-percentile finishing time
for each task is 10ms, the 99\%-percentile finishing time for the slowest task in a large
computation job could take up to 140ms. Indeed, as pointed out by
practitioners~\cite{dean_achieving_2012,dean_tail_2013}, \emph{curbing latency variability}
is key to building responsive applications at Google. 

System designers have come up with a variety of techniques to curbing latency
variability~\cite{dean_tail_2013}, one of them being task replication, \ie, sending the same task
to more the one machines and take the result of whichever finishes first. While this approach
of replicating tasks is known to be effective in reducing task
completion time, it incurs additional resource usage as more machine running time is needed.
On one extreme, replicating the same task many times reduces the completion time variation
significantly, but results in high resource usage. On the other extreme, no replication is
incurs no additional resource usage, but often leads to long task completion time.  In
this paper, we aim to understand this trade-off between completion time and resource usage, and
based on our analysis, propose scheduling algorithms that are efficient in terms of both
completion time and resource usage.

In particular, we introduce a class of stylized yet realistic system models that enable us to
analyze this trade-off analytically or numerically. Our analysis reveals when and why task
replication works, and provides intuition for scheduling algorithm designs in practical
distributed computing systems.

\subsection{Related prior work}
The idea of replicating tasks is recognized by system designers for
parallel commutating~\cite{ghare_improving_2005,cirne_efficacy_2007}, and is first adopted in cloud
computing via the ``backup tasks'' in MapReduce~\cite{dean_mapreduce:_2008}.  A line of
system work
\cite{zaharia_improving_2008,ananthanarayanan_reining_2010,dean_achieving_2012,ananthanarayanan_effective_2013}
further develop this idea to handle various performance variability issues in data centers.

While task replication has been adopted in practice, to the best of our
knowledge, it has not been mathematically analyzed. By contrast, 
for scheduling without task replication, 
there exists a considerable amount of work on \emph{stochastic scheduling}, \ie,
scheduling problems with stochastic processing time~(\cf{}
\cite{pinedo_scheduling:_2012} and references therein).

Finally, some other work also investigate using replication or redundancy to reduce latency
in other contexts such as data
transfer~\cite{joshi_coding_2012,joshi_delay-storage_2013,vulimiri_low_2013,shah_when_2013}.

\subsection{Our contribution}
To the best of our knowledge, we establish the first theoretical analysis of efficient
task replication, by proposing the system model and relevant performance measures. 
Our findings show that:
\begin{enumerate}
    \item While in general there is a trade-off between completion time and resource usage,
        there exists scenarios where replicating tasks helps reduce both completion
        time and resource usage. 
    \item Given the machine execution time distribution, and the number of available machines, 
      we show that the search space for the optimal scheduling policy lies can be reduced to 
      a discrete and finite set of policies. 
    \item When the machine execution time follows a bimodal distribution, we find the optimal
        single-task scheduling policy for two special cases---the two machine case, and the
        single fork case. 
    \item We propose heuristic algorithms to choose the scheduling policy for both single-task
        and multi-task cases. These algorithms can achieve close to optimal performance with low
        computational cost. 
    \item We show that when scheduling multiple tasks, it is useful to take the interaction
        of completion times among different tasks into account, \ie, scheduling each task
        independently can be strictly suboptimal.
\end{enumerate}

\subsection{Organization of the paper}
The rest of the paper is organized as follows. In \Cref{sec:formulation} we define the
notation and describe the scheduling system model. 
Then we provide a motivating example in \Cref{sec:motivating_eg}.
In \Cref{sec:single_task_results} and \Cref{sec:multi_task_results} we provide a summary of our
results on single-task and multi-task scheduling respectively. The detailed analysis for both
single-task and multi-task scheduling are provided in \Cref{sec:proofs_single_task,sec:proofs_multi_task}. We
conclude the paper with brief discussion in \Cref{sec:conclu}.

\section{Model and notation} 
\label{sec:formulation}
\subsection{Notation}
\label{sec:notation}
We introduce here the notation that will be employed throughout the paper.
We use $\nnreals$ to denote all the non-negative real numbers, and $\pints$ all positive
integers. We use $[n]$ to denote all positive integers no larger than $n$, \ie, the set
$\Set{1, 2, \ldots, n}$.

We use bold font to represent a vector, such as $\bt = [t_1, \ldots, t_m]$. We use 
$[a, \bt]$ and $[\bt, a]$ to denote the vector resulting from appending an element
$a$ to the head and tail respectively of the vector $\bt$. 
For any number $x$, we denote 
\begin{equation*}
    \posfunc{x} = \max\Set{0, x}.
\end{equation*}

We use lower case letters (\eg\ $x$) to denote a particular value of the
corresponding random variable denoted in capital letters (\eg\ $X$).
We use ``w.p.'' as a shorthand for ``with probability''.

\subsection{System model}
\label{subsec:sys_model}
We consider the problem of executing a collection of $n$ \emph{embarrassingly parallel tasks} in
a data center.  We assume the execution time of each task on a 
machine in the data center is \iid{} with distribution $F_X$.

A \emph{scheduling policy} requests machines, and assigns tasks to different
machines, possibly at different time instants. More specifically, 
a scheduling policy $\pi$ is specified by a list of tuples 
\begin{equation*}
\pi \defeq \List{(a, t_{i,j}), a \in \cA, i \in [n], t_{i,j} \in \nnreals, j \in \pints}
,
\end{equation*}
where $\cA$ is the set of scheduling actions, $i$ is the task of interest, and $t_{i,j}$ is the
start time for the $j$-th copy of task $i$.

We assume set of scheduling actions $\cA$ contains the following two actions:
\begin{enumerate}
    \item \textsf{AddTaskToMachine}: the scheduler requests a machine to use from the pool of
        available machines and sends a task to run on the machine.
    \item \textsf{TerminateTask}: the scheduler shuts down all machines that are running a task. 
\end{enumerate}

We assume instantaneous \emph{machine completion feedback} is available from each machine notifying
the scheduler when it finishes executing the assigned task. 
This is a reasonable approximation as in general the task execution time is much longer than the
network transmission delay in a data center.

With machine completion feedback information, assuming we always terminate all copies of task $i$
when the earliest copy of task $i$ finishes, the performance of a scheduling policy is
determined by the times for action \textsf{AddTaskToMachine} only. Therefore, we simplify and say a
scheduling policy is specified by the time that it launches machines, \ie, 
\begin{equation*}
\pi = \List{t_{i,j}, i \in [n], t_{i,j} \in \nnreals, j \in \pints}.
\end{equation*}

Let $X_{i,j}$ be the running time of the $j$-th copy of task $i$ if it is not terminated, then
$X_{i,j} \beiid F_X$, and the completion time $T(\pi, i)$ for task $i$ satisfies
\begin{equation*}
    T_i \defeq T_i(\pi) \defeq \min_j (t_{i,j} + X_{i,j})
    .
\end{equation*}

\paragraph{Execution time distribution}~\\
While in practice a task can finish at any time and hence the execution time random variable
should be continuous valued, throughout the paper we model the execution time $X$ as a discrete
random variable, which corresponds to a probability mass function $P_X$, \ie, 
\begin{align}
    X &= \alpha_i \text{ w.p. } p_i, \quad 1 \leq i \leq l ,
    \\ 
    \text{or, } P_X(\alpha_i) &= p_i.
    \label{eq:pmf_def}
\end{align}
where $p_i \in [0, 1]$ and $\sum_{i=1}^l p_i = 1$.
We make this modeling choice for the following reasons:
\begin{enumerate}
    \item In practice we need to estimate the execution time distribution based on 
        log files or traces, and any estimation is more conveniently conducted assuming a
        discrete distribution. For example, a simple estimation could be a histogram of the
        past execution time spans with certain bin size (\eg, 10 seconds).
    \item We can use PMF to derive the upper bound of the performance by constructing the PMF
        in the following way: we set $P_X(\alpha_i) = p_i$ if 
        $p_i$-fraction of the execution time of a single task is within $\alpha_i$. 
    \item Depending on the state $i$ of a machine, its completion time could fall into 
        a range around $\alpha_i$, where state $i$ has probability $p_i$. 
\end{enumerate}
In particular, we often assumes $P_X$ is a \emph{bimodal distribution}, which corresponds to non-zero
probability at two time spans, \ie, 
\begin{equation}
    \label{eq:bimodal_def}
    X = \begin{cases}
        \alpha_1 & \text{ w.p. } p_1
        \\
        \alpha_2 & \text{ w.p. } p_2 = 1 - p_1
    \end{cases}
        .
\end{equation}
This modeling choice is motivated by the phenomenon of
``stragglers''~\cite{dean_mapreduce:_2008}, which indicates the majority of machines in the
data centers finish execution in the normal time span, while a small fraction of the machines
takes exceedingly long to complete execution due to malfunctioning of one or multiple part of
the data center, such as network congestion, software bugs, bad disk, \etc. 
In the bimodal distribution \Cref{eq:bimodal_def}, $\alpha_1$ can be viewed as the
time span that a normal machine takes to execute a task, and $\alpha_2$ the time span
that a straggler takes. 
Indeed, this is
observed from real system data, as pointed out by 
\cite[Observation 3]{chen_analysis_2010}, which states task durations are bimodal,
with different task types having different task duration distributions.

\paragraph{Static and dynamic launching}~\\
A scheduling policy corresponds to a choice of the vector of starting times
$\List{t_{i,j}, i \in [n], j \in \pints}$, and 
depends on when the starting times are chosen, we categorize a policy into static launching or
dynamic launching. 

A static launching policy chooses the starting time vector 
$$
\List{t_{i,j}, i \in [n], t_{i,j} \in \nnreals, j \in \pints}
$$
at time $0$ and does not change it afterwards.  
A dynamic launching policy would change the starting time vector during the execution process
by taking into the machine completion status into account.
    While the static launching policy takes less information into account and hence could be
    potentially less efficient, it allows more time for resource provisioning as we know the
    entire starting vector at $t=0$, hence may be of interest in certain applications or data
    centers.

\subsection{Performance metrics}
We evaluate the performance of a scheduling policy $\pi$ by the following two performance metrics:
\begin{itemize}
    \item completion time $T(\pi)$: the time that at least one copy of every task finishes
        running;
    \item machine time $C(\pi)$: sum of the amount of running time for all machines.
\end{itemize}
In addition to being a measure of resource usage, the machine time $C(\pi)$ can be viewed
    as a proxy for cost of using a public cloud, such as Amazon Web Service (AWS), which
    charges user per hour of machines used. 

For the $j$-th machine that runs task $i$, if the machine starting time $t_{i,j} \leq T_i$, then
it is run for $T_i - t_{i, j}$ seconds, otherwise it is not used at all. Hence, the running
time for this machine is $\posfunc{T_i-t_{i,j}}$. Therefore, 
\begin{align}
    T(\pi) &\defeq \max_i T_i(\pi)
    \label{eq:completion_time}
    \\
    C(\pi) &\defeq \frac{1}{n} \sum_{i=1}^n \sum_j \posfunc{ T_i(\pi) - t_{i,j} }
    \label{eq:machine_time}
    .
\end{align}
\Cref{fig:multi_task_scheduling} contains an example that illustrates a scheduling policy and
its corresponding completion time and node time.  Given two tasks, we launch task 1 at node 1
and 2 at $t_{1,1} = 0$ and $t_{1,2} = 2$ respectively, and task 2 at node 1 and 2 at $t_{2,1} =
0$ and $t_{2,2} = 5$ respectively.  The running time $X_{1,1} = 8$ and $X_{1,2} = 7$, and since
node 1 finishes the task first at time $t=8$, $T_1 = 8$ and node 2 is terminated before it
finishes executing.  Similarly, node 3 is terminated as node 4 finishes task 2 first at time
$T_2 = 10$. 
    The machine time for each machine is their actual running time, which are 8, 6, 10
    and 5 respectively, and hence the total machine time is the sum 29, while completion
    time is $T = \max\Set{T_1, T_2} = 10$.
\begin{figure}[h!]
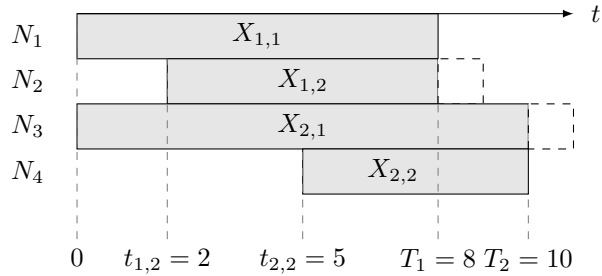

    \begin{center}
        \addtikz{multi_task_scheduling_eg}
    \end{center}
    \caption{Example illustrating a scheduling policy and its performance, where
        $\Set{t_{i,j}}$ are the starting times for tasks, and $\Set{X_{i,j}}$ are the running
        time for tasks. The machine time is $C=29$ and the completion time is $T=10$.
    }
    \label{fig:multi_task_scheduling}
\end{figure}

\paragraph{Cost function}~\\
Intuitively, while introducing task replication reduces $T$, it may incur additional resource usage
and hence increase $C$. In this work we investigate trade-off between these two quantities. 
In particular, we define the following cost function: 
\begin{equation}
    \label{eq:cost_func}
    J_\lambda(\pi) = \lambda \E{T(\pi)} + (1-\lambda) \E{C(\pi)}, 
\end{equation}
where $0 \leq \lambda \leq 1$ reflects the relative importance of completion time.

\begin{remark}
    $\lambda$ can be used to take cost of completion time and cost of computing resource into
    account. $\lambda=1$ and $\lambda=0$ correspond to the case of caring about completion
    time only and machine time only, respectively.
\end{remark}

\subsection{Optimal and suboptimal policies}
The introduction of cost function $J_\lambda(\cdot)$ allows us to compare policies directly, 
and we define optimal and suboptimal policies.
\begin{definition}[Optimal and suboptimal policies]
\label{def:optimal_and_suboptimal_policy}
\label{def:suboptimal_policy}
Given $\lambda$, then the corresponding \emph{optimal} scheduling policy $\pi^*$ is
\begin{equation*}
    \pi^* = \argmin_{\pi} J_\lambda(\pi)
    .
\end{equation*}
\end{definition}
\begin{remark}
    Note that there may exist policies that are neither optimal nor suboptimal.
\end{remark}

However, the search space for optimal policy is non-trivial, as the cost function is not
non-convex, and the search space is large, because we can launch any number of machines at any
time before $\alpha_l$. 

For the rest of the paper, we tackle the optimization problem by narrowing down the search
space, solving for special yet important cases, and proposing heuristic algorithms.

\section{Motivating example}
\label{sec:motivating_eg}
In this section we consider the following example, which shows in certain scenarios, task
replication reduces both $\E{T}$ and $\E{C}$, even for a single task!

Let the execution time $X$ satisfies
\begin{equation*}
    X = \begin{cases}
        2 & \text{w.p. } 0.9
        \\
        7 & \text{w.p. } 0.1
    \end{cases}
    .
\end{equation*}

\begin{figure}[h!]
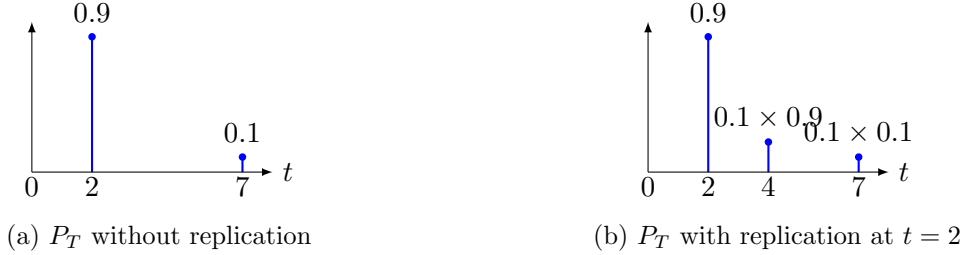

    \centering
    \begin{subfigure}[b]{0.49\textwidth}
        \centering
        \addtikz{bimodal_eg1}
        \caption{$P_{T}$ without replication}
        \label{fig:bimodal_exec_dist}
    \end{subfigure}
    \begin{subfigure}[b]{0.49\textwidth}
        \centering
        \addtikz{bimodal_eg1_redundant}
        \caption{$P_T$ with replication at $t=2$}
        \label{fig:bimodal_exec_dist_r}
    \end{subfigure}
    \caption{Execution time distribution}
    \label{fig:bimodal_exec_dists}
\end{figure}

If we launch one task and wait for its completion, then the completion time distribution is illustrated in
\Cref{fig:bimodal_exec_dist}, and
\begin{align}
    \Latency &= 2 \times 0.9 + 7 \times 0.1 = 2.5
    \\
    \CloudCost &= \Latency = 2.5
    .
\end{align}
If we launch a task at time $t_1 = 0$ and then launch a replicated task at time $t_2 = 2$ if the first one has not finished running by then, 
we have the completion time distribution in
\Cref{fig:bimodal_exec_dist_r}, and in this case,
\begin{align*}
    \Latency &= 2 \times 0.9 + 4 \times 0.09 + 7 \times 0.01 = 2.23
    \\
    \CloudCost &= 2 \times 0.9 + (4+2) \times 0.09 + (7 + 5) \times 0.01 = 2.46
    .
\end{align*}
As we see here, introducing replication actually reduces both expected cost and
expected execution time!

\section{Single-task scheduling}
\label{sec:single_task_results}
In this section we present our results regarding the optimal scheduling for a single
task. While this seems simplistic, it is practically useful if we cannot divide a job into multiple parallel tasks. In
addition, it is impossible to scheduling multiple tasks optimally if we do not even understand how to schedule a single
task optimally.

We postpone all proofs to \Cref{sec:proofs_single_task}.

We first note that in a single-task scheduling scenario, we can represent a scheduling policy by its starting time
vectors, \ie, 
\begin{equation*}
    \pi = \bt = [t_1, t_2, \ldots, t_m] 
    ,
\end{equation*}
where $t_j$ is the time that the task starts on machine $j$.

\begin{remark}
    Note that the starting time vector $[t_1, \ldots, t_m]$ is equivalent to 
    $[t_1, t_2, \ldots,$ $t_m, \alpha_l, \ldots, \alpha_l]$ as tasks scheduled to start at
    $\alpha_l$ will never be launched. We use the two representations interchangeably in this
    \REFNAME. 
\end{remark}

The performance metrics, completion time $\Latency$ and cost $\CloudCost$, can now be expressed as
\begin{align}
    \Latency &= \min_{1 \leq j \leq m} t_j + X_j ,
    \label{eq:single_task_T_def}
    \\
    \CloudCost &= \sum_{j=1}^{m} \posfunc{T- t_j} ,
    \label{eq:single_task_C_def}
\end{align}
where $X_j \beiid P_X$.

We then show in \Cref{thm:static_optimal} that in single-task scheduling, dynamic launching and static
launching policies are equivalent in the sense that they achieve the same \EMTradeoff{}.
\begin{theorem}
    For single task scheduling, 
    the static launching policy achieves the same \EMTradeoff{} region as the dynamic launching policy.
    \label{thm:static_optimal}
\end{theorem}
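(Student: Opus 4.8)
The plan is to prove the two inclusions of achievable $\E{T}$-$\E{C}$ points separately. One direction is immediate: every static launching policy is the special case of a dynamic policy that simply ignores the machine completion feedback, so the static tradeoff region is contained in the dynamic one. The substance of the theorem is the reverse inclusion---that every point achievable by a dynamic policy is also achievable by some static policy. My approach is a sample-path coupling argument resting on the following structural fact, which is peculiar to the single-task setting: before completion, the \emph{only} feedback the scheduler ever receives is the binary event $\Set{T > t}$, ``the task has not yet finished,'' because the instant any copy completes the task is done and every machine is terminated. In particular there is no ``which copy finished'' information to exploit.

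First I would make precise the claim that, for a \emph{deterministic} dynamic policy, the launch times form a fixed schedule along the ``survival'' trajectory. At any time $t$ prior to completion, the scheduler's entire history is summarized by the launch times already issued together with the single event $\Set{T > t}$, and there is no further randomness to condition on. Hence the sequence of times $t_1 \le t_2 \le \cdots$ at which the policy \emph{would} launch copies, assuming the task keeps surviving, is a deterministic sequence. I then define the static policy $\bt = [t_1, t_2, \ldots]$ to be exactly this schedule and couple the two policies by feeding them the same \iid{} draws $X_1, X_2, \ldots$.

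On a common realization, let $T$ be the completion time. In the dynamic policy a copy scheduled at $t_j$ is physically launched iff $t_j \le T$; if $t_j > T$ the task has already finished and the launch is suppressed. In the static policy every copy is scheduled, but one with $t_j > T$ contributes $\posfunc{T - t_j} = 0$ machine time and, since $t_j + X_j > t_j > T$, never lowers the completion-time minimum $\min_j (t_j + X_j)$ (the boundary case $t_j = T$ contributes zero either way and is immaterial). Thus both policies realize the identical completion time $T$ and identical machine time $\sum_j \posfunc{T - t_j}$ on every sample path, so $\E{T}$ and $\E{C}$ coincide. The main obstacle is rigorously establishing the deterministic-schedule lemma and then dispatching \emph{randomized} dynamic policies: for the latter I would note that, since the only feedback is survival, drawing all internal coins at time $0$ expresses such a policy as a mixture over deterministic launch sequences, each of which maps to a static policy by the construction above; allowing the usual time-sharing on both sides, the dynamic and static regions are therefore convex hulls of the same set of deterministic achievable points and must agree. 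It is worth flagging that this is exactly the argument that fails in the multi-task case, where the completion of one task reveals information about the shared environment and makes dynamic launching strictly more powerful.
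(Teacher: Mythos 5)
Your proposal is correct and follows essentially the same route as the paper's proof: both construct the static policy from the launch times the dynamic policy would emit along the ``task still unfinished'' trajectory (the paper phrases this as conditioning on no machine finishing until $\alpha_l$), and then verify via a sample-path coupling that the two policies launch and terminate machines at identical times on every realization. The additional care you take with the trivial inclusion, the boundary case $t_j > T$, and randomized policies only makes rigorous what the paper dismisses as ``not difficult to see.''
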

\begin{remark}
    The above result does not hold for scheduling multiple tasks in general, as the dynamic
    launching policy can take different actions depending on if any other tasks are finished.
\end{remark}

Therefore, for the single-task scenario, we can focus on the static launching policy without any loss of generality.

\subsection{General execution time distribution}
\label{sec:single_task_general}
Given the machine execution time distribution $P_X$ and a starting time vector $\bt = [t_1, \ldots, t_m]$, we first show
an important property of $\E{T(\bt)}$ and $\E{C(\bt)}$ in \Cref{thm:linearity}.
\begin{theorem}
    \label{thm:linearity}
    $\E{T(\bt)}$ and $\E{C(\bt)}$ are piecewise linear functions of $\bt$. 
\end{theorem}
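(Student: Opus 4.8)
The plan is to exploit that the execution time $X$ is modeled as a discrete random variable supported on finitely many values $\alpha_1, \ldots, \alpha_l$, so that the copies $X_1, \ldots, X_m \beiid P_X$ jointly take only $l^m$ distinct realizations. My strategy is to condition on a realization, argue that both $\Latency$ and $\CloudCost$ are then \emph{deterministic} piecewise linear functions of $\bt$, and finally recover $\E{T(\bt)}$ and $\E{C(\bt)}$ as finite probability-weighted sums of such functions. Throughout, I take ``piecewise linear'' to mean that $\nnreals^m$ admits a finite partition into convex polyhedra on each of which the function coincides with an affine function.

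Fix a realization $(x_1, \ldots, x_m) \in \Set{\alpha_1, \ldots, \alpha_l}^m$. First I would handle $\Latency$: conditioned on this realization, $T(\bt) = \min_{1 \leq j \leq m}(t_j + x_j)$ is the pointwise minimum of $m$ affine functions of $\bt$, which is piecewise linear --- the domain decomposes into polyhedral regions indexed by which $j$ attains the minimum, and on each region $T$ equals a single affine function. Next I would handle $\CloudCost$: writing $C(\bt) = \sum_{j=1}^m \posfunc{T(\bt) - t_j}$, each difference $T(\bt) - t_j$ is piecewise linear, and applying $\posfunc{\cdot} = \max\Set{0, \cdot}$ to a piecewise linear function yields another piecewise linear function, since it merely subdivides each existing polyhedral piece according to the sign of the affine expression on it. Summing over $j$ then keeps $C$ piecewise linear.

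The last step is to write $\E{T(\bt)} = \sum_{(x_1, \ldots, x_m)} \left(\prod_{j=1}^m P_X(x_j)\right) \min_{j}(t_j + x_j)$, and analogously for $\E{C(\bt)}$; both are finite sums of the piecewise linear functions produced above, and a finite sum of piecewise linear functions is again piecewise linear (one refines to the common partition obtained by intersecting the polyhedral pieces of the summands). I expect the only delicate point --- rather than a genuine obstacle --- to be verifying these closure properties cleanly: that the class of piecewise linear functions on $\nnreals^m$ is closed under finite sums, under composition with $\posfunc{\cdot}$, and under subtraction of a coordinate, each of which reduces to the facts that a finite intersection of polyhedra is a polyhedron and that affine functions restrict to affine functions. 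Confirming that the total number of pieces stays finite completes the argument.
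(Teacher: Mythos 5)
Your proof is correct, but it takes a genuinely different route from the paper's. You condition on the full realization $(x_1,\ldots,x_m)$ of the $m$ execution times, so the weights $\prod_{j} P_X(x_j)$ in the expectation are constants independent of $\bt$, and all dependence on $\bt$ sits inside deterministic functions $\min_j (t_j + x_j)$ and $\sum_j \posfunc{\min_i(t_i+x_i) - t_j}$, which are piecewise linear by standard closure properties (pointwise minimum of affine functions, positive part, finite sums, common polyhedral refinement). The paper instead conditions on the coarser events $\cA_{k_1,k_2}$ (which machine finishes first, and at which support point); the probabilities of those events \emph{do} depend on $\bt$, so the paper needs \Cref{lemma:prob_ordering} to show they are constant on each region where the ordering $\sigma_\bt$ of the candidate finishing times $t_i + \alpha_j$ is fixed, and it then builds an explicit polyhedral partition $\cB_1(\sigma)$, $\cB_2(\bb,\sigma)$ of $[0,\alpha_l]^m$ on which $\E{T}$ and $\E{C}$ are affine. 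Your route is more elementary and self-contained, avoiding the ordering lemma entirely. What the paper's heavier machinery buys is the explicit description of the linearity regions: their boundaries (ties $t_{j_1} + \alpha_{l_1} = t_{j_2} + \alpha_{l_2}$ and sign changes of $t_{k_1} + \alpha_{k_2} - t_i$) are precisely what the proof of \Cref{thm:corner_points_optimal} uses to pin the optimizer to the finite set $\cV_m$. Your decomposition in fact generates the same family of hyperplanes (ties among the $t_j + x_j$ and zeros of $T - t_j$), so it could serve the same downstream purpose, but you would need to make that refinement explicit to recover the corner-point structure.
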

A further refinement of \Cref{thm:linearity} results \Cref{thm:corner_points_optimal}, which
indicates the optimal starting time vector $\bt \in [0, \alpha_l]^m$ is located in a finite
set, which is composed by a constrained integer combination of the support of $P_X$.
\begin{theorem}
    \label{thm:corner_points_optimal}
    The starting time vector $\bt = [t_1, \ldots, t_m]$ that minimizes $J_\lambda$ satisfies that
    \begin{equation}
        t^*_j \in \cV_m,
        \label{eq:lattice_pt_cond}
    \end{equation}
    where $\cV_m$ is a finite set such that
    \begin{align}
        \label{eq:corner_pts}
        \cV_m \defeq \SetDef{v}
        {
            v = \sum_{j=1}^l \alpha_{j} w_{j},
            0 \leq v \leq \alpha_l,
            \sum_{j=1}^l \abs{w_j} \leq m,
            w_{j} \in \ints
        }
        .
    \end{align}
\end{theorem}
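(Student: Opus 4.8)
The plan is to combine the piecewise-linear structure granted by \Cref{thm:linearity} with a concavity analysis that pins down exactly which breakpoint hyperplanes a minimizer can lie on, and then to read the coordinates of such a minimizer off the active constraints. By \Cref{thm:linearity}, $J_\lambda(\bt) = \lambda\E{\Latency(\bt)} + (1-\lambda)\E{\CloudCost(\bt)}$ is continuous and piecewise linear on the compact box $[0,\alpha_l]^m$, so it attains its minimum at a vertex of the polyhedral complex cut out by its breakpoint hyperplanes together with the box faces; it therefore suffices to exhibit one minimizing vertex whose every coordinate lies in $\cV_m$. First I would locate the breakpoint hyperplanes explicitly, writing $\E{\Latency} = \int_0^\infty \Prob{\Latency > \tau}\,d\tau$ with $\Prob{\Latency > \tau} = \prod_k \Prob{X > \tau - t_k}$, and $\E{\CloudCost} = \sum_j \int_{t_j}^\infty \Prob{\Latency > \tau}\,d\tau$. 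Tracking where these integrals change linear form shows that the only candidate hyperplanes are the box faces $t_j = 0$ and $t_j = \alpha_l$, the \emph{weight-one} family $t_j - t_k = \alpha_i$ (arising from the lower limits $t_j$ of the cost integrals), and the \emph{weight-two} family $t_j - t_k = \alpha_a - \alpha_b$ (arising from reorderings of the atoms $t_k + \alpha_i$ inside the minimum defining $\Latency$).

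The key structural observation is that $\E{\Latency(\bt)} = \E{\min_k(t_k + X_k)}$ is concave in $\bt$, being an expectation of a minimum of affine functions, and that likewise each $\E{\min(\Latency, t_j)}$ is concave. Using $\posfunc{\Latency - t_j} = \Latency - \min(\Latency, t_j)$ gives the decomposition
\[
J_\lambda = \big(\lambda + (1-\lambda)m\big)\,\E{\Latency} \;-\; (1-\lambda)\sum_{j} \E{\min(\Latency, t_j)}.
\]
The weight-two hyperplanes are precisely the kinks contributed by $\E{\Latency}$ and by the $\Latency$-part of each $\E{\min(\Latency, t_j)}$, and I would show that these align so that $J_\lambda$ is concave transverse to every weight-two hyperplane. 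Consequently $J_\lambda$ restricted to any segment crossing such a hyperplane is minimized at an endpoint, so any minimizer can be slid off the weight-two hyperplanes without increasing $J_\lambda$ until only box faces and weight-one hyperplanes remain active.

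At a minimizing vertex whose active constraints are all box faces or weight-one relations, the active relations $t_j - t_k = \alpha_i$ form a forest on the index set $[m]$, and linear independence forces each tree to contain exactly one coordinate fixed by a box face to the value $0$ or $\alpha_l$. Propagating along each tree from its grounded coordinate, every $t^*_j$ equals that anchor value (contributing coefficient weight at most one, since $0 = \sum_i 0\cdot\alpha_i$ and $\alpha_l = 1\cdot\alpha_l$) plus or minus a sum of at most $(\text{tree size})-1 \le m-1$ support points. Hence $t^*_j = \sum_i \alpha_i w_i$ with $w_i \in \ints$, $\sum_i \abs{w_i} \le 1 + (m-1) = m$, and $0 \le t^*_j \le \alpha_l$ because the vertex lies in the box; this is exactly membership in $\cV_m$, as claimed in \Cref{eq:corner_pts}.

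The main obstacle is the concavity argument of the second step. While $\E{\Latency}$ is exactly concave, $\E{\CloudCost}$ is neither convex nor concave, so I must verify that the concave kinks of $\E{\Latency}$ dominate at every weight-two hyperplane and that the $\E{\min(\Latency, t_j)}$ terms do not introduce a competing convex kink there. The delicate case is a vertex where weight-two constraints are active simultaneously with box or weight-one constraints, where the ``sliding off'' must be carried out while staying at a vertex and without increasing $J_\lambda$; establishing the sign of the one-dimensional second differences of $J_\lambda$ across the weight-two hyperplanes in full generality, rather than the largely mechanical steps of locating the hyperplanes and extracting coordinates from the forest, is where the real work lies.
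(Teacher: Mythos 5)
Your plan is sound, and it takes a genuinely different route from the paper's own proof. The paper argues directly from \Cref{thm:linearity}: a minimizer of the piecewise linear $J_\lambda$ must sit on the boundary of two or more linearity regions $\cB_2(\bb,\sigma)$; on such boundaries relations of the form $t_{j_1}-t_{j_2}=\alpha_{l_1}-\alpha_{l_2}$ and $t_{j_3}-t_{j_4}=\alpha_{l_3}$ hold; it then anchors $t_1=0$ and declares membership in $\cV_m$ to be ``not hard to see.'' It performs no concavity analysis and never distinguishes your weight-one and weight-two families. That brevity glosses over exactly the issue you isolate: a vertex of the \emph{full} arrangement can be pinned down by a chain of weight-two relations, \eg{} $t_1=0$ and $t_{i+1}-t_i=\alpha_2-\alpha_1$ for $1\le i\le m-1$, whose last coordinate $(m-1)(\alpha_2-\alpha_1)$ has total weight $2(m-1)>m$ and in general admits no shorter representation, hence need not lie in $\cV_m$. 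Without an argument ruling out such vertices---which is precisely what your weight-two elimination step supplies---the bound $\sum_j\abs{w_j}\le m$ in \Cref{eq:corner_pts} does not follow. So your route is not merely an alternative; it is what makes the stated constant $m$ actually come out.

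The step you flag as ``the real work'' is a genuine gap in your write-up as it stands, but the claim is true and can be closed by a counting argument. Condition on a realization $(x_1,\ldots,x_m)$ of the execution times and use your decomposition pointwise: $\lambda T+(1-\lambda)C=\bigl[\lambda+(1-\lambda)m\bigr]T-(1-\lambda)\sum_j\min(T,t_j)$, restricted to an arbitrary line in $\bt$-space. At a weight-two crossing, where $t_{k_1}+x_{k_1}=t_{k_2}+x_{k_2}=T$, let $\kappa\le 0$ denote the slope change of $T$ across the crossing. Each $\min(T,t_j)$ has slope change lying in $[\kappa,0]$, and it is nonzero only for indices $j$ with $t_j\ge T$ at the crossing; since every $\alpha_i>0$, the crossing machines themselves satisfy $t_{k_1}<T$ and $t_{k_2}<T$, so at most $m-2$ indices contribute. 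The net slope change of $\lambda T+(1-\lambda)C$ is therefore at most
\begin{equation*}
\bigl[\lambda+(1-\lambda)m\bigr]\kappa-(1-\lambda)(m-2)\kappa=(2-\lambda)\kappa\le 0 ,
\end{equation*}
\ie{} the kink is concave; averaging over realizations preserves this. Note this computation also covers your ``delicate case'': sliding directions are chosen inside the affine span of the currently active box and weight-one constraints (any weight-two constraint implied by them can be ignored), so maintained weight-one hyperplanes contribute no kinks along the line, crossings of \emph{new} weight-one or box hyperplanes are stopping points rather than crossings, and ties $t_j=T$ at a weight-two crossing are absorbed by the bound above since such $j$ still contribute a slope change in $[\kappa,0]$ and are still distinct from $k_1,k_2$. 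Each sliding step is non-increasing in $J_\lambda$ and raises the number of independent active box/weight-one constraints, so after at most $m$ steps you reach a minimizing vertex of the box/weight-one system, to which your forest-propagation argument applies and yields $\sum_j\abs{w_j}\le 1+(m-1)=m$. With that verification inserted, your outline is a complete proof---and a more rigorous one than the paper's.
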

\Cref{thm:corner_points_optimal} directly leads to \Cref{coro:simple_corner_point_case}.
\begin{corollary}
    \label{coro:simple_corner_point_case}
    If PMF $P_X$ satisfies that $\alpha_j = k_j \alpha, 1 \leq j \leq l, k_j \in \pints$, then
    the optimal starting time vector $\bt^*$ satisfies
    \begin{equation*}
        t_j \in \cV_m \subset \Set{0, \alpha, 2\alpha, \ldots, \alpha_l = k_m\alpha}
        ,
    \end{equation*}
    where $\cardinality{\cV_m} \leq k_m + 1$.
\end{corollary}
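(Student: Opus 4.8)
The plan is to apply \Cref{thm:corner_points_optimal} directly and then simplify the description of $\cV_m$ under the lattice hypothesis $\alpha_j = k_j\alpha$. The theorem already guarantees that every optimal coordinate satisfies $t_j^* \in \cV_m$, so it suffices to show that under this hypothesis $\cV_m$ is contained in the uniform grid $\Set{0, \alpha, 2\alpha, \ldots, k_l\alpha}$, where $k_l$ is the multiplier of the largest support value $\alpha_l$ (the quantity written $k_m$ in the statement), and then to count the points of that grid.

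First I would take a generic element $v \in \cV_m$ and substitute the hypothesis into its defining expression, obtaining $v = \sum_{j=1}^l \alpha_j w_j = \alpha \sum_{j=1}^l k_j w_j$. Because each $k_j \in \pints$ and each $w_j \in \ints$, the inner sum $\sum_{j=1}^l k_j w_j$ is an integer; hence every $v \in \cV_m$ has the form $K\alpha$ for some $K \in \ints$, \ie\ $\cV_m$ contains only integer multiples of $\alpha$. Next I would impose the range constraint $0 \leq v \leq \alpha_l$ that appears in the definition of $\cV_m$: since $\alpha > 0$ and $\alpha_l = k_l\alpha$, writing $v = K\alpha$ forces $0 \leq K \leq k_l$, so $v \in \Set{0, \alpha, 2\alpha, \ldots, k_l\alpha}$. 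This establishes the claimed inclusion, and since this grid has exactly $k_l + 1$ elements we conclude $\cardinality{\cV_m} \leq k_l + 1$.

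I do not expect a genuine obstacle here, as the core of the argument is a one-line closure property: an integer combination of integer multiples of $\alpha$ is again an integer multiple of $\alpha$. The only points requiring care are bookkeeping ones, namely confirming that the extra defining constraint $\sum_{j=1}^l \abs{w_j} \leq m$ can only discard candidate values and therefore never enlarges $\cV_m$ beyond the grid, and reconciling the index on the largest multiplier $k_l$ with the symbol $k_m$ appearing in the statement so that the cardinality bound reads consistently.
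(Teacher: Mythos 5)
Your proposal is correct and matches the paper's intent exactly: the paper gives no separate proof, stating only that \Cref{coro:simple_corner_point_case} follows directly from \Cref{thm:corner_points_optimal}, and your argument (substituting $\alpha_j = k_j\alpha$ into the definition of $\cV_m$ to see every element is an integer multiple of $\alpha$ in $[0,\alpha_l]$) is precisely that direct derivation. Your bookkeeping remarks are also apt, including the observation that the statement's symbol $k_m$ should be read as $k_l$, the multiplier of $\alpha_l$.
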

Given \Cref{thm:corner_points_optimal}, we can calculate the $\E{T}$ and $\E{C}$ for all
starting time vectors that satisfy \Cref{eq:lattice_pt_cond}, then discard suboptimal ones,
leading to the \EMTradeoff{} as shown in \Cref{fig:eg_L3}, which are plotted for the following
two execution times:
\begin{align}
    X &= \begin{cases}
        4  & \text{ w.p. } 0.6
        \\
        8  & \text{ w.p. } 0.3
        \\
        20 & \text{ w.p. } 0.1
    \end{cases}
    ,
    \label{eq:X_L3_eg}
    \\
    X' &= \begin{cases}
        6  & \text{ w.p. } 0.8
        \\
        20 & \text{ w.p. } 0.2
    \end{cases}
    \label{eq:X_eg}
    .
\end{align}
\begin{figure}[h!]
    \begin{center}
    \begin{subfigure}[b]{0.49\textwidth}
        \includegraphics[width=0.99\textwidth]{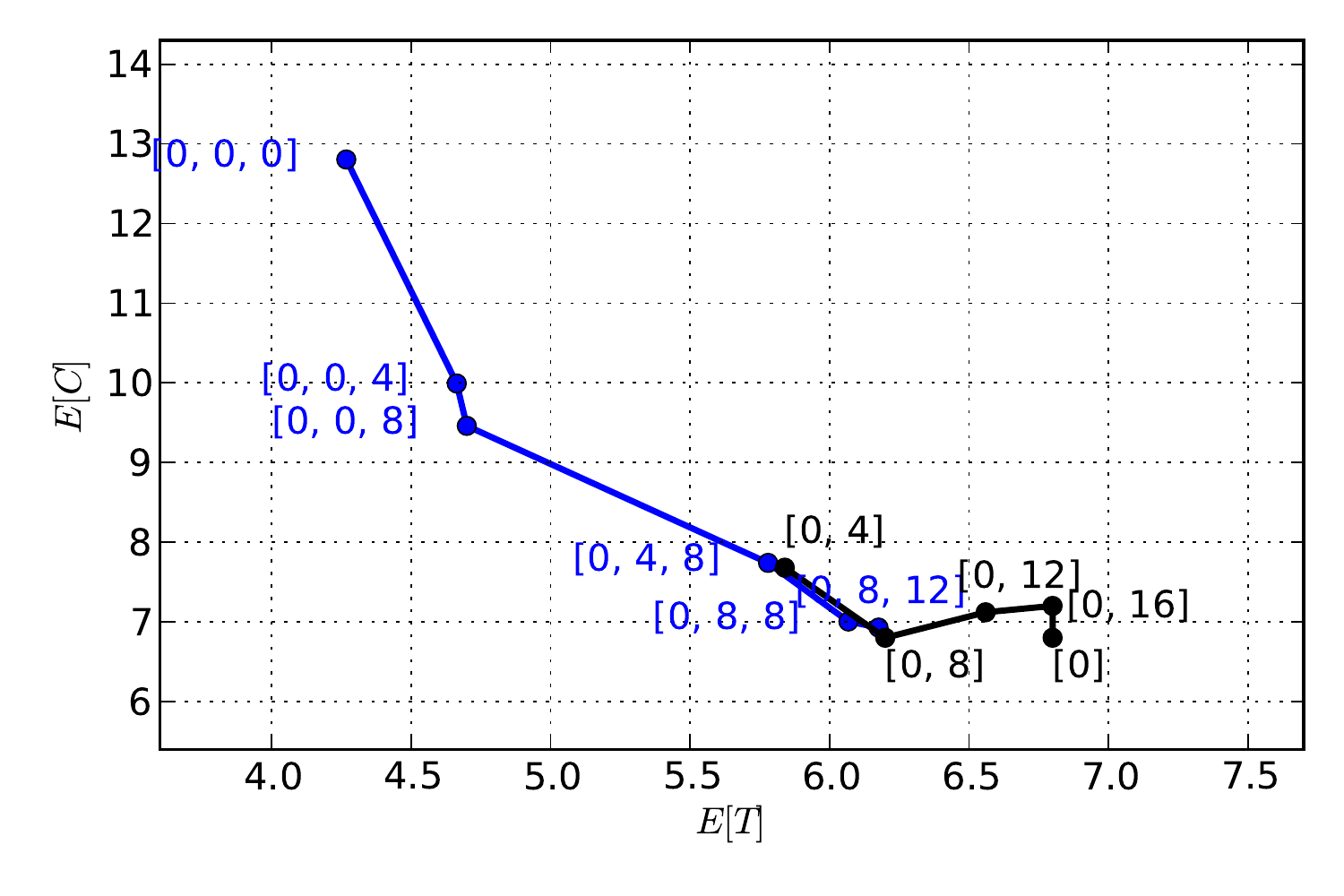}
        \caption{Execution time $X$ in \Cref{eq:X_L3_eg}}
    \end{subfigure}
    \begin{subfigure}[b]{0.49\textwidth}
        \includegraphics[width=0.99\textwidth]{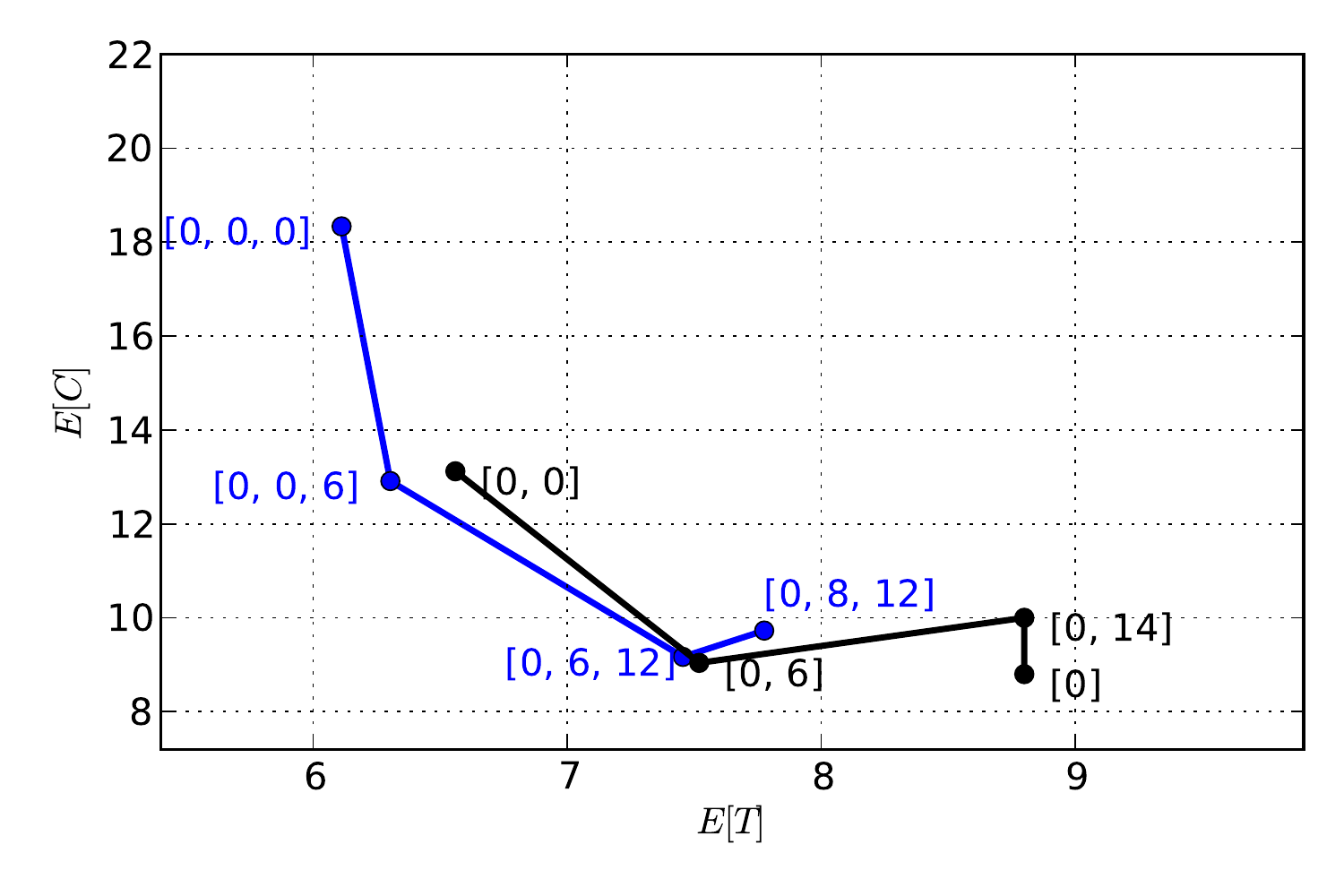}
        \caption{Execution time $X'$ in \Cref{eq:X_eg}}
    \end{subfigure}
    \end{center}
    \caption{Examples of the \EMTradeoff{} with $m=3$ machines.
        The labels for each point is the corresponding starting time vector, and the region is
        defined by two piecewise linear segments, which are colored blue and black
        respectively.
    }
    \label{fig:eg_L3}
\end{figure}

Furthermore, we show that the optimal choice of the $(i+1)$-th element of the starting time vector is
dependent on the starting times before it, \ie, $t_1, t_2, \ldots, t_{i}$, via
\Cref{thm:corner_pt_optimal}.
In particular, the optimal value belongs to a set $\cU$ that we called \emph{corner points} and
define in \Cref{def:corner_points}.
\begin{definition}[Corner points]
    \label{def:corner_points}
    Given $\bt = [t_1, t_2, \ldots, t_i]$, let 
    \begin{align*}
        \cU_1 &\defeq \Set{0, \alpha_1, \ldots, \alpha_l}, 
        \\
        \cU_{i+1}(t_1, \ldots, t_i)
        &\defeq 
        \bigcup_{u \in \cU_i(t_1, \ldots, t_{i-1})} 
        \bigg\{u + t_i - b \alpha_j :
            \\
            &\;\; \qquad 0 \leq u + t_i - b \alpha_j \leq \alpha_l,
            \\
            &\;\; \qquad 1\leq j \leq l, b \in \Set{0,1} \bigg\}, \quad i \geq 1
        ,
    \end{align*}
    and we called $\cU_{i+1}$ the corner points given $\bt$.
\end{definition}
\begin{theorem}
    \label{thm:corner_pt_optimal}
    Given $\bt = [t_1, t_2, \ldots, t_i]$ and the corner points $\cU_{i+1}(\bt)$, then  
    the optimal scheduling policy with $i+1$ machines
    \begin{equation*}
        \bt' = [t_1, t_2, \ldots, t_i, t_{i+1}]
    \end{equation*}
    satisfies
    \begin{equation*}
        t_{i+1} \in \cU_{i+1}.
    \end{equation*}
\end{theorem}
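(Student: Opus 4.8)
The plan is to turn the statement into a one-dimensional minimization. I fix the prefix $t_1,\dots,t_i$ and study
\[
  f(s) \defeq J_\lambda\bigl([t_1,\dots,t_i,s]\bigr),\qquad s\in[0,\alpha_l],
\]
as a function of the single free coordinate $s=t_{i+1}$. \Cref{thm:linearity} already tells us that $\bt\mapsto J_\lambda(\bt)$ is piecewise linear, so its restriction $f$ to the axis-parallel segment $\Set{[t_1,\dots,t_i,s]:s\in[0,\alpha_l]}$ is a continuous, piecewise-linear function of the scalar $s$. A continuous piecewise-linear function on a closed interval attains its minimum at an endpoint or at one of its finitely many kinks. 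Hence it suffices to show that every kink of $f$, together with the two endpoints $0$ and $\alpha_l$, lies in $\cU_{i+1}$; the latter two correspond to launching the new machine immediately and to never launching it.

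To locate the kinks I would rewrite the two expectations in survival form. Writing $T_{1:i}\defeq\min_{1\le j\le i}(t_j+X_j)$, which is independent of $X_{i+1}$, and using the elementary identities $\E{\Latency}=\int_0^\infty\Prob{\Latency>\tau}\,d\tau$ and $\E{\CloudCost}=\int_0^\infty N(\tau)\,\Prob{\Latency>\tau}\,d\tau$, where $N(\tau)$ counts the machines launched by time $\tau$, the objective becomes
\[
  f(s)=\int_0^\infty\bigl[\lambda+(1-\lambda)N(\tau)\bigr]\,\Prob{X>\tau-s}\,\Prob{T_{1:i}>\tau}\,d\tau .
\]
The only dependence on $s$ sits in two step functions of $\tau$: the factor $\Prob{X>\tau-s}$, which jumps at $\tau=s+\alpha_j$, and the indicator inside $N(\tau)=N_i(\tau)+\indicator{s\le\tau}$ contributed by the new machine, which jumps at $\tau=s$. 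Since the remaining factors are piecewise constant in $\tau$ with jumps only at the launch times $t_1,\dots,t_i$ and at the support points of $T_{1:i}$, the integral is linear in $s$ except when one of the moving jumps crosses one of these fixed jumps. Solving the crossing conditions yields kinks exactly at $s=w$ and at $s=w-\alpha_j$, where $w$ ranges over the fixed jump locations and $\alpha_j$ over the support of $X$.

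It remains to identify this kink set with $\cU_{i+1}$, which I would do by induction on $i$. The inductive hypothesis is that $\cU_i$ records the fixed jump locations relevant to the first $i-1$ machines, namely the support of $T_{1:i-1}$ together with the earlier launch times, so that it supplies the set $\Set{w}$ at the previous stage. Passing to stage $i+1$ adjoins machine $i$ at time $t_i$: from $T_{1:i}=\min(T_{1:i-1},\,t_i+X_i)$ the support of $T_{1:i}$ is obtained from that of $T_{1:i-1}$ by adjoining the points $t_i+\alpha_j$, and each such jump location, turned into a kink of $f$ by subtracting either $0$ or one copy of $\alpha_j$, is precisely a point of the form $u+t_i-b\alpha_j$ with $u\in\cU_i$ and $b\in\Set{0,1}$ --- the defining expression in \Cref{def:corner_points}. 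Here $b=0$ captures the kinks $s=w$ coming from the running-machine indicator, $b=1$ the kinks $s=w-\alpha_j$ coming from the execution-time shift, and the range condition $0\le u+t_i-b\alpha_j\le\alpha_l$ discards infeasible candidates. Matching these to $\cU_{i+1}$ and adding the two endpoints gives $t_{i+1}\in\cU_{i+1}$.

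The routine part is the one-dimensional reduction and the crossing computation; the delicate part is the induction that ties the recursion of \Cref{def:corner_points} to the evolving support of $T_{1:i}$. I expect the real work to be in verifying that the map $u\mapsto u+t_i-b\alpha_j$ faithfully propagates \emph{all} the relevant jump locations and that none of the optimizing kinks is lost when candidates are clipped to $[0,\alpha_l]$ --- in particular, checking that the endpoints and the never-launch value $s=\alpha_l$ stay available. I anticipate that closing this step cleanly uses the corner-point structure of the prefix guaranteed by \Cref{thm:corner_points_optimal}, so that every support point of $T_{1:i}$ is itself an admissible integer combination of the $\alpha_j$ and the argument stays inside the lattice underlying $\cV_m$.
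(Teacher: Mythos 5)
Your proof skeleton is the same as the paper's: restrict $J_\lambda$ to the segment $\Set{[t_1,\ldots,t_i,s] \setst s\in[0,\alpha_l]}$, note that it is continuous and piecewise linear in $s$, and argue that the minimizer must be a breakpoint, which you then try to identify with $\cU_{i+1}$. The paper's proof consists of exactly this assertion (linearity of $\E{T}$ and $\E{C}$ between consecutive points of $\cU_{i+1}$) stated without justification, so your survival-form computation of the breakpoints---$s=w$ and $s=w-\alpha_j$, with $w$ ranging over the launch times $t_k$ and the support points $t_k+\alpha_{j'}$ of $T_{1:i}$, $k\le i$---is a genuine addition, and it is correct. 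The gap is precisely the step you flag with ``I expect'' and ``I anticipate'': the induction identifying this breakpoint set with the recursion of \Cref{def:corner_points}. That induction cannot be closed, because the recursion translates \emph{all} of $\cU_i$ by $t_i$ (every element of $\cU_{i+1}$ has the form $u+t_i-b\alpha_j$ with $u\in\cU_i$), whereas the breakpoints contributed by machines $1,\ldots,i-1$ do not move when machine $i$ is appended; they are therefore lost in general, and your inductive hypothesis is not preserved by the map $u\mapsto u+t_i-b\alpha_j$.

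This is not a presentation issue you could have papered over: the statement as written is false, so no argument closes the gap without changing \Cref{def:corner_points}. Take $l=2$, $\alpha_1=1$, $\alpha_2=10$, $p_1=p_2=1/2$, and the prefix $\bt=[0,9]$, which is even ``corner-consistent'' since $9\in\cU_2(0)=\Set{0,1,9,10}$; the recursion gives $\cU_3(0,9)=\Set{0,8,9,10}$. Direct computation gives $\E{C}=27/4-s/2$ on $[0,1]$ but $\E{C}=25/4$ on $[1,8]$: there is a kink at $s=1$ (your ``$s=w$'' type, $w=t_1+\alpha_1$, created by $\posfunc{T-s}$ on the event $X_1=1$) in the interior of the interval $[0,8]$ between consecutive elements of $\cU_3$, already contradicting the linearity claim. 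Since $\E{T}=13/4+s/4$ on $[0,9]$, and $\E{T}=11/2$, $\E{C}=11-s/2$ on $[9,10]$, the cost $J_{1/2}$ has slopes $-1/8,+1/8,+1/4,-1/4$ on $[0,1],[1,8],[8,9],[9,10]$, so its unique minimizer is $s=1\notin\cU_3(0,9)$ (value $4.875$, versus $5,\,5.75,\,6,\,5.75$ at $0,8,9,10$). With a non-corner prefix it is even quicker: for $\bt=[0,0.5]$ and $\lambda=1$ the optimum is $t_3=0\notin\cU_3(0,0.5)=\Set{0.5,1.5,8.5,9.5}$. The repair suggested by your own kink analysis is to redefine the recursion as $\cU_{i+1}\defeq\cU_i\cup\Set{u+t_i-b\alpha_j \setst u\in\cU_1,\ b\in\Set{0,1},\ 1\le j\le l}$ (clipped to $[0,\alpha_l]$): this retains the old breakpoints and adjoins exactly the new ones, and with that definition your one-dimensional argument goes through.
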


Finally, we have the following simple observation that, again, help to reduce the search space
of scheduling policy.
\begin{lemma}
\label{lem:last_corner_sub_optimal}
Starting a machine at any time $\alpha_l - \alpha_1 \leq t \leq \alpha_l$ is suboptimal.
\end{lemma}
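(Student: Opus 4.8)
The plan is to show that any starting time vector $\bt$ that places a machine at some time $t_k \in [\alpha_l - \alpha_1, \alpha_l]$ is Pareto-dominated, \ie, there is another policy with no larger $\E{T}$ and no larger $\E{C}$, at least one of them strictly smaller; by \Cref{def:suboptimal_policy} this makes $\bt$ suboptimal. I would carry this out in two reductions. First, I would argue that we may assume some machine starts at time $0$: if $s \defeq \min_j t_j > 0$, then shifting every start time down by $s$ leaves each machine's running time $\posfunc{T - t_j}$ unchanged on every sample path (both $T$ and every $t_j$ drop by $s$) while reducing the completion time by exactly $s$ on every sample path; this strictly lowers $\E{T}$ at equal $\E{C}$, so $\bt$ is already suboptimal and we are done. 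Hence the only remaining case is $s = 0$, where some machine finishes by $X \le \alpha_l$, so that $T(\bt) \le \alpha_l$ on every sample path.

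Second, under $s = 0$ I would compare $\bt$ to the policy $\bt'$ obtained by deleting the late machine $k$. The single fact driving everything is that this machine cannot finish before $\alpha_l$: since $X_k \ge \alpha_1$ and $t_k \ge \alpha_l - \alpha_1$, its finishing time satisfies $t_k + X_k \ge \alpha_l \ge T(\bt')$. Consequently, on every sample path $T(\bt) = \min\Set{T(\bt'), t_k + X_k} = T(\bt')$, \ie, machine $k$ never wins the race, so deleting it does not change the completion time. Because the completion time is unchanged, every surviving machine runs for the same duration as before, so from \Cref{eq:single_task_C_def} the costs differ by exactly the running time of machine $k$: $C(\bt) = C(\bt') + \posfunc{T(\bt) - t_k}$. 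Taking expectations gives $\E{T(\bt)} = \E{T(\bt')}$ and $\E{C(\bt)} \ge \E{C(\bt')}$, so $\bt'$ weakly dominates $\bt$.

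It remains to promote this to strict domination in cost, which is where the edge cases live and is the only delicate point. For $t_k \in [\alpha_l - \alpha_1, \alpha_l)$ I would exhibit a positive-probability event on which machine $k$ actually incurs cost: when all $m$ machines draw $\alpha_l$ (probability $p_l^m > 0$), the completion time equals $\alpha_l$ because a machine starts at $0$, and machine $k$ then runs for $\alpha_l - t_k > 0$; hence $\E{C(\bt)} > \E{C(\bt')}$ and $\bt'$ strictly dominates. The single genuinely degenerate case is the right endpoint $t_k = \alpha_l$: such a machine starts no earlier than the completion time on every sample path, so it never runs and is vacuous---exactly the equivalence $[t_1,\ldots,t_m] \equiv [t_1,\ldots,t_m,\alpha_l]$ already noted in the remark following the single-task representation---so including it can never help. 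I expect the main obstacle to be purely bookkeeping: making the $s=0$ reduction clean and confirming that the bound $X_k \ge \alpha_1$ is exactly what forces $t_k + X_k \ge \alpha_l$, so that the deletion argument is guaranteed never to increase the completion time.
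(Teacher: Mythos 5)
Your proof is correct and follows essentially the same route as the paper's: the key fact in both is that a machine started at $t_k \geq \alpha_l - \alpha_1$ cannot finish before $\alpha_l$, the worst-case completion time of the machine started at $0$, so it can only add cost without ever reducing the completion time. Your additions---justifying the reduction to $\min_j t_j = 0$ by a shift argument, exhibiting a positive-probability event that makes the cost increase strict, and handling the endpoint $t_k = \alpha_l$ via the equivalence with not launching the machine---merely fill in details that the paper's terser proof (which simply asserts $t_1 = 0$ without loss of generality) leaves implicit.
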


\subsubsection{Heuristic policy search algorithm}
While \Cref{thm:corner_points_optimal} reduces the search space of the optimal scheduling
policy, there could still be exponentially many policies to evaluate.  In this section we
introduce a heuristic single-task scheduling algorithm in \Cref{algo:single_task_myopic} that has
much lower complexity.

As shown in \Cref{algo:single_task_myopic}, this heuristic algorithm builds the starting time
vector $[t_1, t_2, \cdots t_m]$ iteratively, with the constraint that $t_i$'s are in
non-decreasing order.  Given a starting time vector $[t_1, \cdots t_i]$, this algorithm
compares the policies $[t_1, \cdots t_i, t_{i+1}]$ where $t_{i+1}$ can be one of first $k$
corner points in $\cU(t_1, \ldots, t_i)$, and choose the policy $t_{i+1}$ that leads to the
minimum cost.  As we increase $k$, the algorithm compares a larger space of policies and hence
achieves a lower cost as illustrated by the example in \Cref{fig:cmp_L3}, for the execution
time defined in \Cref{eq:X_L3_eg}.
The example also
demonstrates that a small $k$ may be sufficient to achieve near-optimal cost. 
\begin{algorithm}
\begin{algorithmic}
    \State Initialize $t_1 = 0$ and $\bt = [t_1]$
    	\For{$i = 2, \ldots m$}
            \State $U^+(\bt) \gets$ sorted elements of $\cU(\bt)$ which are $\geq t_{i-1}$
            \State $\pi_0 \gets [\bt, \alpha_l]$, policy that keeps the machine unused
            \For{$j = 1, \ldots, k$}
                    \State  $\pi_j \gets [\bt, U^+(\bt)[j]]$  
     		\EndFor
            \State $j^* \gets \argmin_{j \in {0, 1, \cdots k}}  J_{\lambda}(\pi_j)$
         	\State $t_i \gets U^+(\bt)[j]$ and $\bt \gets [\bt, t_i]$
    	\EndFor
\end{algorithmic}
\caption{$k$-step heuristic algorithm for single-task scheduling}
\label{algo:single_task_myopic}
\end{algorithm}
\begin{figure}[h!]
    \begin{center}
        \includegraphics[width=0.5\textwidth]{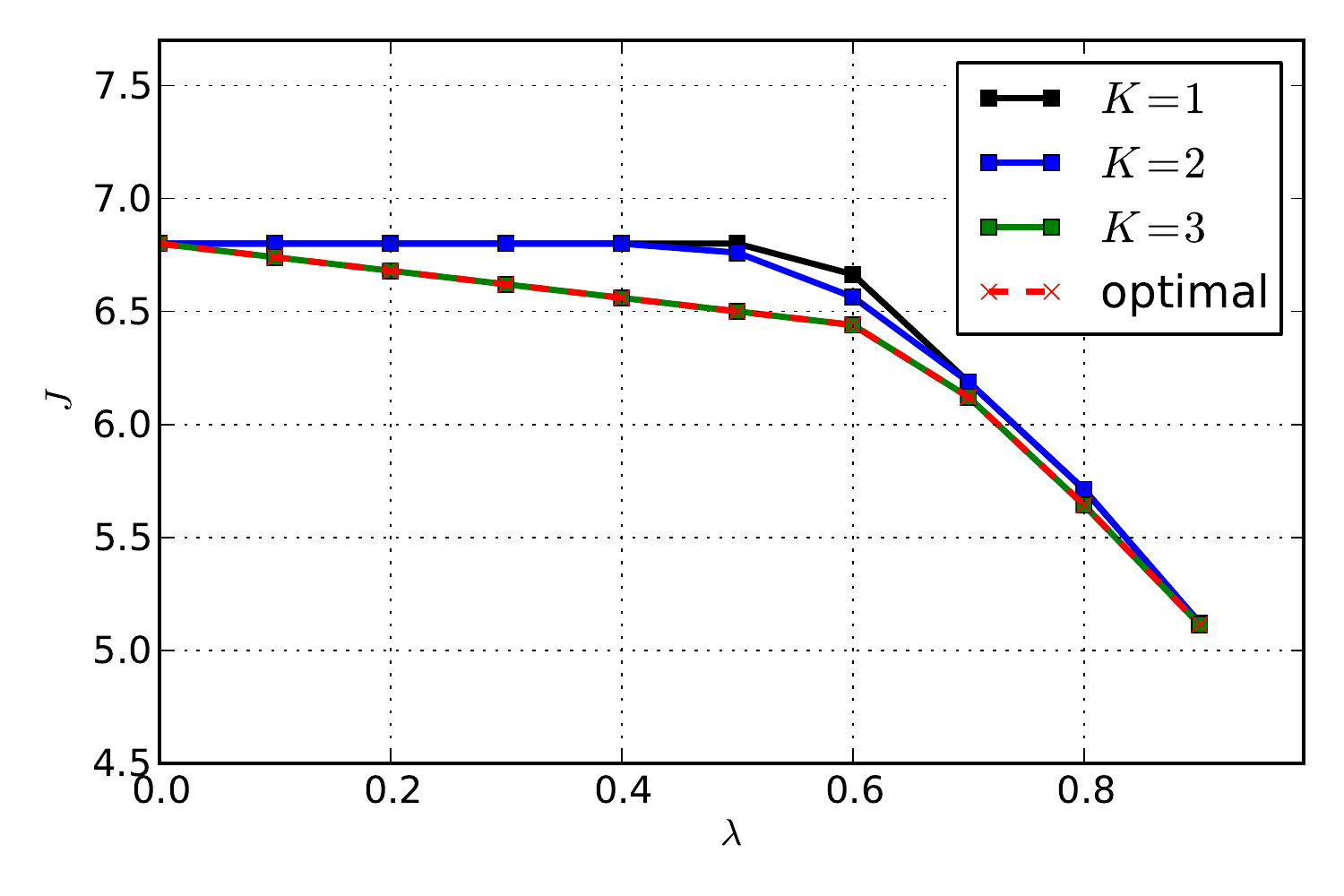}
    \end{center}
    \caption{Comparison between the heuristic search and optimal scheduling policy for
        execution time in \Cref{eq:X_L3_eg}.
    }
    \label{fig:cmp_L3}
\end{figure}

\subsection{Bimodal execution time distribution}
While results in \Cref{sec:single_task_general} help characterize the \EMTradeoff{}
and find good scheduling policies, they provide little insight about when and why task
replication helps. For this, we analyze the special yet important case of bimodal execution time
distribution (\cf{} \Cref{eq:bimodal_def}).

In this section we present results for scheduling one task with two machines, which is the
simplest non-trivial example. The scheduling policy can be represented as the vector $\bt =
[t_1 = 0, t_2]$, and we provide a complete characterization of the \EMTradeoff{} in 
\Cref{fig:bimodel_eg_tradeoff}, leading to \Cref{thm:bimodal_2m_tradeoff}.
\begin{figure}[tb!]
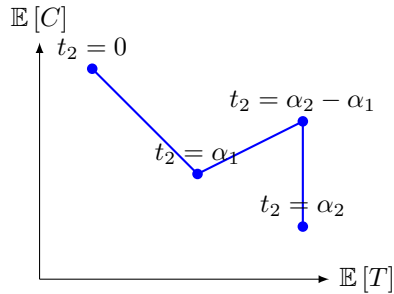

    \begin{center}
        \addtikz{bimodal_eg2_tradeoff}
    \end{center}
    \caption{The $\E{T}$-$\E{C}$ tradeoff for bimodal execution with two machines, which corresponds to
        starting time vector $\bt=[t_1=0, t_2]$.}
    \label{fig:bimodel_eg_tradeoff}
\end{figure}
\begin{theorem}
    \label{thm:bimodal_2m_tradeoff}
    Given $P_X$ is a bimodal distribution and we have at most two machines, 
    the optimal policy $\bt = [t_1 = 0, t_2]$ satisfies $t_2 \in \Set{0, \alpha_1, \alpha_2}$.
\end{theorem}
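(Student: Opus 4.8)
The plan is to collapse the two-machine problem to the minimization of a single univariate piecewise-linear function and then prune its candidate breakpoints down to $\Set{0, \alpha_1, \alpha_2}$. Since $t_1 = 0$ is fixed, I write $t = t_2 \in [0, \alpha_2]$, using the convention (from the earlier remark on equivalent representations) that $t = \alpha_2 = \alpha_l$ means the second machine is never launched. By \Cref{eq:single_task_T_def,eq:single_task_C_def} we have $T = \min\{X_1,\, t + X_2\}$ and $C = T + \posfunc{T - t}$, with $X_1, X_2$ i.i.d.\ bimodal, and I analyze $\E{T}$ and $\E{C}$ as functions of this single variable $t$ (assuming WLOG $\alpha_1 < \alpha_2$, so $\alpha_l = \alpha_2$).

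First I would condition on the pair $(X_1, X_2)$, which takes the four values $(\alpha_i, \alpha_j)$ with probabilities $p_i p_j$. Only the event $\Set{X_1 = \alpha_2,\, X_2 = \alpha_1}$ makes $T$ depend on $t$, through $\min\{\alpha_2,\, t + \alpha_1\}$; in the other three events $T$ is the constant $\alpha_1$ or $\alpha_2$. Carrying out this bookkeeping (which \Cref{thm:linearity} already guarantees to be piecewise linear) shows that the only kinks of $\E{T}(t)$ and $\E{C}(t)$ on $[0,\alpha_2]$ occur at $t = \alpha_1$ (from the $\posfunc{T-t}$ term in the $T=\alpha_1$ events) and at $t = \alpha_2 - \alpha_1$ (from the threshold $t + \alpha_1 = \alpha_2$). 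Hence $J_\lambda(t) = \lambda \E{T}(t) + (1-\lambda)\E{C}(t)$ is piecewise linear with breakpoints contained in $\Set{\alpha_1,\, \alpha_2 - \alpha_1}$, so over the closed interval it attains its minimum at a breakpoint or an endpoint, i.e.\ at some $t^* \in \Set{0,\, \alpha_1,\, \alpha_2 - \alpha_1,\, \alpha_2}$. This step does the real work: it is a sharpening of \Cref{thm:corner_points_optimal}, whose set $\cV_2$ also contains the spurious candidate $2\alpha_1$, which the linearity argument discards automatically since it is never a kink and hence lies in the interior of a linear piece.

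It then remains only to eliminate $t = \alpha_2 - \alpha_1$. This follows immediately from \Cref{lem:last_corner_sub_optimal} with $\alpha_l = \alpha_2$, since $\alpha_2 - \alpha_1$ is exactly the left endpoint of the suboptimal launching interval $[\alpha_l - \alpha_1, \alpha_l]$. If a self-contained argument is preferred, I would instead observe that on $[\alpha_2 - \alpha_1, \alpha_2]$ the value $\E{T}$ is constant (equal to the no-replication value $p_1\alpha_1 + p_2\alpha_2$), while every conditional contribution to $\E{C}$ is non-increasing in $t$; hence $t = \alpha_2$ weakly dominates $t = \alpha_2 - \alpha_1$ for every $\lambda$, so the minimizer may be taken in $\Set{0, \alpha_1, \alpha_2}$, as claimed.

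The main obstacle is the case analysis of the second paragraph: one must verify that no additional breakpoint is hiding, in particular that the interaction between the $\min$ defining $T$ and the $\posfunc{\cdot}$ defining $C$ produces kinks only at $\alpha_1$ and $\alpha_2 - \alpha_1$, and that this holds uniformly regardless of the ordering of $\alpha_1$ and $\alpha_2 - \alpha_1$ (the two cases $\alpha_2 \lessgtr 2\alpha_1$). Once the breakpoint set is pinned down, the final pruning via \Cref{lem:last_corner_sub_optimal} (or the direct domination argument) is routine.
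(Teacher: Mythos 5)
Your proposal is correct and takes essentially the same route as the paper: the paper's \Cref{lemma:bimodal_2m_exprs} computes exactly the piecewise-linear expressions for $\E{T}$ and $\E{C}$ in $t_2$ that you derive by conditioning on $(X_1,X_2)$, with the same two breakpoints $\alpha_1$ and $\alpha_2-\alpha_1$, and reads the theorem off from them. Your explicit pruning of $t_2=\alpha_2-\alpha_1$ (via \Cref{lem:last_corner_sub_optimal}, or the observation that on $[\alpha_2-\alpha_1,\alpha_2]$ the expectation $\E{T}$ is constant while $\E{C}$ is non-increasing) merely spells out the step the paper leaves implicit in ``follows directly.''
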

In \Cref{thm:bimodal_2_mach_opt_policy}, we provide further insights by showing the
suboptimality (\cf{} \Cref{def:optimal_and_suboptimal_policy}) of certain scheduling policies as 
the execution time distribution $P_X$ varies, which is characterized by
the ratio of its fast and slow response time, $\alpha_1/\alpha_2$, and the probability that it
finishes at its fast response time, $p_1$. 
\begin{theorem}
\label{thm:bimodal_2_mach_opt_policy}
Given the bimodal execution time and two machines, 
\begin{enumerate}[(a)]
    \item $[0, \alpha_2-\alpha_1]$ is always suboptimal;
    \item $[0, \alpha_1]$ is suboptimal 
        if $\frac{\alpha_1}{\alpha_2} >\frac{p_1}{1+p_1}$;
    \item $[0, \alpha_2]$ is suboptimal 
        if $\frac{\alpha_1}{\alpha_2} < \frac{2p_1-1}{4p_1-1}$;
\end{enumerate} 
Given $\lambda$, we can find the optimal policy by comparing the ratio
$\frac{1-\lambda}{\lambda}$ to the thresholds, 
\begin{align}
\tau_1 &= \frac{ \alpha_1 p_1(3-2p_1) + \alpha_2(1-p_1)(1-2p_1)}{(\alpha_2-\alpha_1)(1-p_1)p_1} \\
\tau_2 &= \frac{1+ 2p_1(1-p_1)}{p_1(1-p_1)} \\
\tau_3 &= \frac{\alpha_1(4p_1-1) + \alpha_2 (1-2p_1)}{\alpha_2 - 2\alpha_1) p_1}
\end{align}
\begin{enumerate}[(a)]
\setcounter{enumi}{3}
\item 
    If $\frac{\alpha_1}{\alpha_2} >\frac{p_1}{1+p_1}$, then
    policy $[0,\alpha_2]$ is optimal if $\frac{1-\lambda}{\lambda}  \leq \tau_1$, and $[0,0]$ is optimal otherwise. 
\item 
    If $\frac{2p_1-1}{4p_1-1} \leq \frac{\alpha_1}{\alpha_2} \leq \frac{p_1}{1+p_1}$, then
    policy $[0, \alpha_1]$ is optimal if $ \tau_3 <\frac{1-\lambda}{\lambda} \leq
    \tau_2$, policy $[0,\alpha_2]$ is optimal if $\frac{1-\lambda}{\lambda}  \leq \tau_3$, and
    $[0,0]$ is optimal otherwise. 
\item If $\frac{\alpha_1}{\alpha_2} < \frac{2p_1-1}{4p_1-1}$, then
    policy $[0, \alpha_1]$ is optimal if $\frac{1-\lambda}{\lambda} \leq \tau_2$, and $[0,0]$ is optimal otherwise.
\end{enumerate}
\end{theorem}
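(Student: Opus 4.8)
My plan is to reduce the optimization to an explicit comparison among a few evaluated policies. \Cref{thm:bimodal_2m_tradeoff} already confines the optimal $t_2$ to $\Set{0,\alpha_1,\alpha_2}$, so the heart of the argument is to compute $\E{T(\bt)}$ and $\E{C(\bt)}$ for $[0,0]$, $[0,\alpha_1]$ and $[0,\alpha_2]$, together with $[0,\alpha_2-\alpha_1]$ for the separate claim~(a). I would obtain these by conditioning on $(X_1,X_2)$ and applying \eqref{eq:single_task_T_def} and \eqref{eq:single_task_C_def}. The one genuine fork here---and the step I expect to demand the most care---is the sign of $\alpha_2-2\alpha_1$: a second copy launched at $\alpha_1$ finishes its fast run at $2\alpha_1$, which overtakes a slow first copy (finishing at $\alpha_2$) only when $2\alpha_1<\alpha_2$, so the formulas for $[0,\alpha_1]$ branch on whether $2\alpha_1\lessgtr\alpha_2$. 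Two simplifications fall out immediately: $[0,\alpha_2]$ is just the no-replication policy (its second machine never does useful work, giving $\E{T}=\E{C}=p_1\alpha_1+(1-p_1)\alpha_2$), and $[0,\alpha_2-\alpha_1]$ does not reduce $\E{T}$ at all relative to no replication while strictly increasing $\E{C}$.

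Next I would establish the suboptimality claims (a)--(c), reading ``suboptimal'' (\cf{} \Cref{def:optimal_and_suboptimal_policy}) as ``its $(\E{T},\E{C})$ point fails to lie on the lower-left boundary of the achievable region, hence it never minimizes $J_\lambda$.'' Claim (a) is then immediate from the previous paragraph: $[0,\alpha_2-\alpha_1]$ has the same $\E{T}$ as $[0,\alpha_2]$ but strictly larger $\E{C}$, so it is dominated. Claim (c) comes from comparing $[0,\alpha_1]$ with $[0,\alpha_2]$: the former always has strictly smaller $\E{T}$, so $[0,\alpha_2]$ is dominated exactly when $\E{C}$ is also smaller for $[0,\alpha_1]$, and this inequality collapses to $\alpha_1(4p_1-1)<\alpha_2(2p_1-1)$, i.e.\ $\alpha_1/\alpha_2<(2p_1-1)/(4p_1-1)$. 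Claim (b) is the delicate one, because $[0,\alpha_1]$ is not dominated by any single competitor; instead its point lies on or above the segment joining $[0,0]$ and $[0,\alpha_2]$. When $2\alpha_1\ge\alpha_2$ this is trivial ($[0,\alpha_1]$ then matches the no-replication $\E{T}$ at strictly larger $\E{C}$), and when $2\alpha_1<\alpha_2$ it is a collinearity computation whose boundary turns out to be exactly $\alpha_1/\alpha_2=p_1/(1+p_1)$, giving suboptimality for $\alpha_1/\alpha_2>p_1/(1+p_1)$.

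With the surviving policies identified in each range of $\alpha_1/\alpha_2$, parts (d)--(f) become a linear program in one variable. Because $J_\lambda$ is linear in $(\E{T},\E{C})$, for any two survivors $\pi,\pi'$ the comparison $J_\lambda(\pi)\lessgtr J_\lambda(\pi')$ is decided by where $\frac{1-\lambda}{\lambda}$ sits relative to the indifference value obtained from $J_\lambda(\pi)=J_\lambda(\pi')$; evaluating this on the pairs $([0,0],[0,\alpha_2])$, $([0,\alpha_1],[0,\alpha_2])$ and $([0,0],[0,\alpha_1])$ produces the thresholds $\tau_1$, $\tau_3$ and $\tau_2$. A useful check is that the $[0,0]$-versus-$[0,\alpha_1]$ differences in both $\E{T}$ and $\E{C}$ are multiples of $\alpha_1$, so the corresponding threshold depends on $p_1$ alone, consistent with the stated form of $\tau_2$. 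In the regime $\alpha_1/\alpha_2>p_1/(1+p_1)$ only $[0,0]$ and $[0,\alpha_2]$ survive, leaving the single threshold $\tau_1$ of part (d); in $\alpha_1/\alpha_2<(2p_1-1)/(4p_1-1)$ only $[0,0]$ and $[0,\alpha_1]$ survive, leaving $\tau_2$ of part (f); and in the middle regime all three are Pareto-optimal, so the minimizer walks through them as $\frac{1-\lambda}{\lambda}$ crosses $\tau_3$ and then $\tau_2$, which is part (e).

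The main obstacle is keeping the case split $2\alpha_1\lessgtr\alpha_2$ consistent with the three regime boundaries, rather than any individual computation. I would dispose of this at the outset by noting that both $p_1/(1+p_1)$ and $(2p_1-1)/(4p_1-1)$ are strictly below $1/2$, so the entire ``replication helps'' region lives in the branch $2\alpha_1<\alpha_2$; this lets a single set of formulas for $[0,\alpha_1]$ carry all of (c)--(f) and confines the $2\alpha_1\ge\alpha_2$ branch to the already-settled regime (b). The residual work is bookkeeping: verifying that within the middle regime the two active thresholds order correctly (so that $[0,\alpha_1]$ genuinely occupies an interval of $\lambda$ rather than being skipped), and checking that the pairwise winners splice into one globally consistent ordering of $[0,0]$, $[0,\alpha_1]$ and $[0,\alpha_2]$ as $\lambda$ sweeps $[0,1]$.
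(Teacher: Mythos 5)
Your proposal is correct and follows essentially the same route as the paper's proof: explicit evaluation of the candidate policies' $(\E{C},\E{T})$ points, dominance/convex-hull arguments in that plane for the suboptimality claims (a)--(c), and pairwise indifference-slope comparisons between $[0,0]$, $[0,\alpha_1]$, $[0,\alpha_2]$ yielding the thresholds $\tau_1,\tau_2,\tau_3$ for (d)--(f). The only cosmetic difference is that the paper dispatches claim (a) and the $2\alpha_1\geq\alpha_2$ branch of claim (b) by citing \Cref{lem:last_corner_sub_optimal}, whereas you re-derive those facts directly from the computed expectations; this is bookkeeping, not a different argument.
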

\Cref{thm:bimodal_2_mach_opt_policy} is summarized in \Cref{fig:bimodal_2_mach_opt_policy_region}. 
\begin{figure}[tb!]
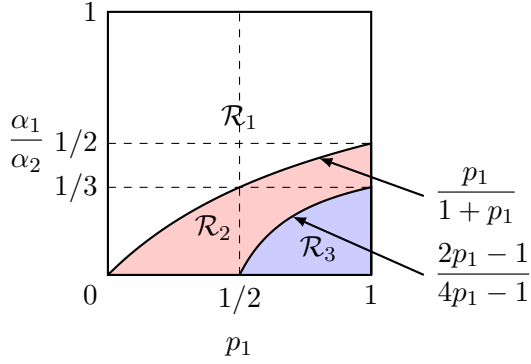

    \begin{center}
        \addtikz{bimodal_2m_region}
    \end{center}
    \caption{Bimodal two machine. $\cR_1$ is the range of parameters that $\bt=[0,\alpha_1]$ is
        strictly suboptimal, $\cR_3$ is the range $\bt=[0, \alpha_2]$ is strictly suboptimal,
        which means no task replication is strictly suboptimal.}
    \label{fig:bimodal_2_mach_opt_policy_region} 
\end{figure}

\section{Multi-task scheduling}
\label{sec:multi_task_results}
In this section we investigate the scheduling of multiple tasks. We first show that it is crucial to take the
interaction of different tasks into account in \Cref{thm:separation_suboptimal}, then extend our algorithm in
\Cref{algo:single_task_myopic} for multi-task scheduling.
All proofs are postponed to \Cref{sec:proofs_multi_task}.

\begin{theorem}[Separation is suboptimal]
    \label{thm:separation_suboptimal}
    Given $m$ tasks, applying the optimal one-task scheduling policy for each of them individually
    is suboptimal.
\end{theorem}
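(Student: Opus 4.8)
The plan is to prove the statement by exhibiting an explicit counterexample: a discrete distribution $P_X$, a number of tasks $m$, and a weight $\lambda$ for which some joint dynamic policy strictly beats the policy that applies the optimal single-task policy to each task separately. The structural reason to expect a gap is the asymmetry between the two terms of $J_\lambda$: the completion time $\E{\max_i T_i}$ couples the tasks through a maximum, whereas the machine time $\frac{1}{n}\E{\sum_i \sum_j \posfunc{T_i - t_{i,j}}}$ is additive across tasks. Consequently, lowering a single task's $T_i$ reduces $\E{T}$ only when that task is the straggler determining the maximum, while every replica that fails to move the maximum is pure wasted machine time. The separated policy replicates a task based solely on that task's own progress, oblivious to whether another unfinished task already pins $\max_i T_i$; a joint dynamic policy can condition on the global completion state and withhold exactly those replicas. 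Since \Cref{thm:static_optimal} is explicitly noted to fail for multiple tasks, this extra conditioning is genuinely available only to a dynamic policy.

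Concretely, I would first use the corner-point characterization (\Cref{thm:corner_points_optimal,thm:corner_pt_optimal}) to compute the single-task Pareto frontier for $P_X$ and select $\lambda$ so that the optimal single-task policy $\pi_1^\ast$ is a \emph{reactive} replication rule: launch one copy at $t=0$ and a second copy at a reveal time $\tau$ only if the task has not finished by $\tau$, \ie{} only on ``slow'' realizations. I would choose $P_X$ so that on slow realizations this replica lowers $\E{T_i}$ by a conditional amount $\delta_T$ at the cost of raising $\E{C_i}$ by $\delta_C$, placing $\pi_1^\ast$ strictly in the tradeoff regime where it is worthwhile exactly when $\lambda \delta_T > (1-\lambda)\delta_C$. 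The separated policy $\pi_{\mathrm{sep}}$ applies $\pi_1^\ast$ to both tasks; I would compute $\E{T(\pi_{\mathrm{sep}})}$ from the distribution of the maximum of two i.i.d. per-task completion times, and $\E{C(\pi_{\mathrm{sep}})}$ by additivity.

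Next I would define a joint dynamic policy $\pi_{\mathrm{joint}}$ that agrees with $\pi_{\mathrm{sep}}$ on every global state except the ``double-straggler'' event in which both tasks are still unfinished at $\tau$; in that event $\pi_{\mathrm{joint}}$ replicates neither task. The key calculation is that, conditioned on this event, replicating both stragglers lowers $\max_i T_i$ only by a second-order amount $\delta_T' < \delta_T$ — the maximum drops only when \emph{both} replicas succeed — while it still costs the same per-task $\delta_C$ of machine time. Hence $\pi_{\mathrm{joint}}$ strictly lowers the conditional $J_\lambda$ whenever $\lambda \delta_T' < (1-\lambda)\delta_C$. Choosing $\lambda$ in the nonempty window $\lambda \delta_T' < (1-\lambda)\delta_C < \lambda \delta_T$ makes $\pi_1^\ast$ replicate (so $\pi_{\mathrm{sep}}$ replicates both stragglers) yet makes withholding strictly better jointly; since the two policies agree elsewhere, $J_\lambda(\pi_{\mathrm{joint}}) < J_\lambda(\pi_{\mathrm{sep}})$, proving suboptimality.

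The main obstacle is the second step: arranging that the reactive rule is genuinely the \emph{optimal} single-task policy, not merely a feasible one. The competitor to beat is preemptive replication (two copies at $t=0$), which gives the second copy a head start and therefore attains lower $\E{T_i}$ per unit of machine time; in many parameter regimes it occupies the single-task frontier in place of the reactive rule, and if $\pi_1^\ast$ is preemptive then $\pi_{\mathrm{joint}}$ has no information at $t=0$ to exploit and the gap closes. A related subtlety is that a bimodal $P_X$ never suffices: a reactively launched replica either reduces both $\E{T_i}$ and $\E{C_i}$ (the ``magic'' regime, in which per-task optimization already coincides with the joint optimum) or cannot beat the single straggler value at all, so no strict tradeoff arises. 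The construction must therefore use at least three support points, with the spacing of the support relative to $\tau$ and a sufficiently likely fast mode chosen so that reactive replication is a strict vertex of the convex lower envelope of the $(\E{T},\E{C})$ points generated by the finite corner-point set of \Cref{thm:corner_points_optimal}; verifying this strictly is the one unavoidable computation.
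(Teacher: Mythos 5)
Your overall strategy---exhibit a two-task instance where a joint, state-aware policy strictly beats applying the single-task optimum to each task---is the same genre as the paper's proof, but your deviation runs in the opposite direction. The paper picks bimodal parameters with $2\alpha_1 < \alpha_2$ and $\frac{\alpha_1}{\alpha_2} > \frac{2p_1-1}{4p_1-1}$ so that the single-task optimum is \emph{no} replication ($[0,\alpha_2]$), and its joint policy \emph{adds} a replica of the still-running task at time $\alpha_1$ whenever the \emph{other} task finishes early; in the window $\frac{2p_1-1}{4p_1-1} < \frac{\alpha_1}{\alpha_2} < \frac{2p_1-1}{3p_1-1}$ this lowers both $\E{T}$ and $\E{C}$, so separation is beaten for every $\lambda$ at once. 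You instead want the single-task optimum to be reactive replication and have the joint policy \emph{withhold} replicas in the double-straggler state, winning only for $\lambda$ in a window. Your structural reasoning (machine time is additive across tasks while completion time couples them through the max, so the marginal value of a replica depends on the other task's state) is sound, and the withholding deviation can indeed be made to work: the gain from replicating both stragglers is second order while the cost stays first order.

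The genuine flaw is your claim that ``a bimodal $P_X$ never suffices,'' which is the assertion steering your whole construction. Your dichotomy (either the reactive replica reduces both $\E{T_i}$ and $\E{C_i}$, or it cannot beat the straggler at all) omits the intermediate regime $\frac{2p_1-1}{4p_1-1} < \frac{\alpha_1}{\alpha_2} < \frac{1}{2}$, where the replica launched at $\alpha_1$ strictly lowers $\E{T_i}$ by $(1-p_1)p_1(\alpha_2-2\alpha_1)$ and strictly raises $\E{C_i}$ by $(1-p_1)\left[\alpha_1(4p_1-1)+\alpha_2(1-2p_1)\right]$---a genuine tradeoff. The paper's own \Cref{thm:bimodal_2_mach_opt_policy} (region $\cR_2$ in \Cref{fig:bimodal_2_mach_opt_policy_region}) shows $[0,\alpha_1]$ \emph{is} the single-task optimum there for a range of $\lambda$. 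In that bimodal regime your argument closes immediately: conditional on both tasks being unfinished at $\alpha_1$, both originals finish at $\alpha_2$ deterministically, so the maximum drops (to $2\alpha_1$) exactly when \emph{both} replicas are fast, i.e.\ with probability $p_1^2$ rather than $p_1$ per task, while each replica still incurs the same positive cost increment; since $p_1^2 < p_1$, the required $\lambda$-window is nonempty (and is compatible with the constraint that $[0,\alpha_1]$ also beats $[0,0]$, e.g.\ at $p_1 = 0.9$, $\alpha_1/\alpha_2 = 0.32$). By insisting on three support points you pushed ``the one unavoidable computation''---verifying that reactive replication is a vertex of the single-task frontier---into a setting where it is far heavier, whereas for bimodal it is already settled by \Cref{thm:bimodal_2_mach_opt_policy}; and as written, that verification, and hence the proof, is left incomplete.
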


Given the complexity of searching for optimal scheduling policy in the single-task case, we
again aim to search for scheduling policy via a heuristic algorithm. In particular, we aim to
find a good static policy that takes the interaction among tasks into account. To achieve this,
we apply \Cref{algo:single_task_myopic}, but using the cost function for the multi-task case,
where $T$ and $C$ are defined in \Cref{eq:completion_time} and \Cref{eq:machine_time} respectively.
This search procedure produces a starting time vector $\bt = [t_1, t_2, \ldots, t_m]$, and at
each time $t_i$, we launch an additional copy for each of the unfinished task.

\Cref{fig:multi_task_eg1} shows an example for the execution time in \Cref{eq:X_L3_eg}. The
scheduling policy with replication reduces $J$, especially when $\lambda$ is large. We also see 
that as the number of tasks $n$ increases, the cost $J$ increases as the
impact of the slowest task gets more severe. 
Again, introducing replication mitigates this degradation.

Results in \Cref{fig:multi_task_eg1} indicate that when $\lambda$ is not too big, it may be
beneficial to introduce replication at multiple time instants, as in this case, we are more
concerned with cost $\CloudCost$ and hence introducing replication gradually is preferred. By
contrast, when $\lambda$ is close to 1, a good scheduling policy should introduce replication
early to cut down completion time as early as possible.
\begin{figure}[h!]
    \begin{center}
    \begin{subfigure}[b]{0.49\textwidth}
        \includegraphics[width=0.97\textwidth]{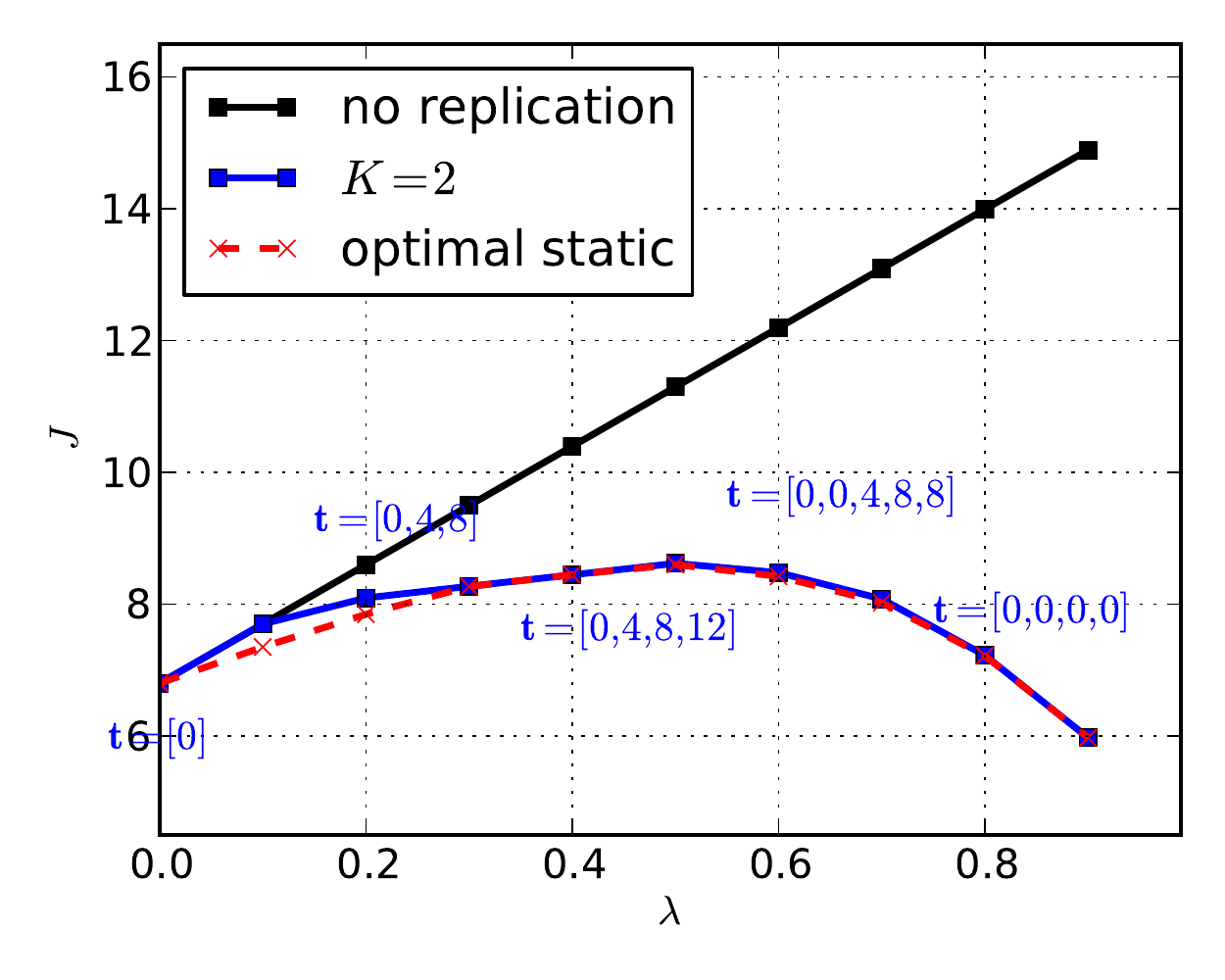}
        \caption{$N=10$ tasks}
    \end{subfigure}
    \begin{subfigure}[b]{0.49\textwidth}
        \includegraphics[width=0.97\textwidth]{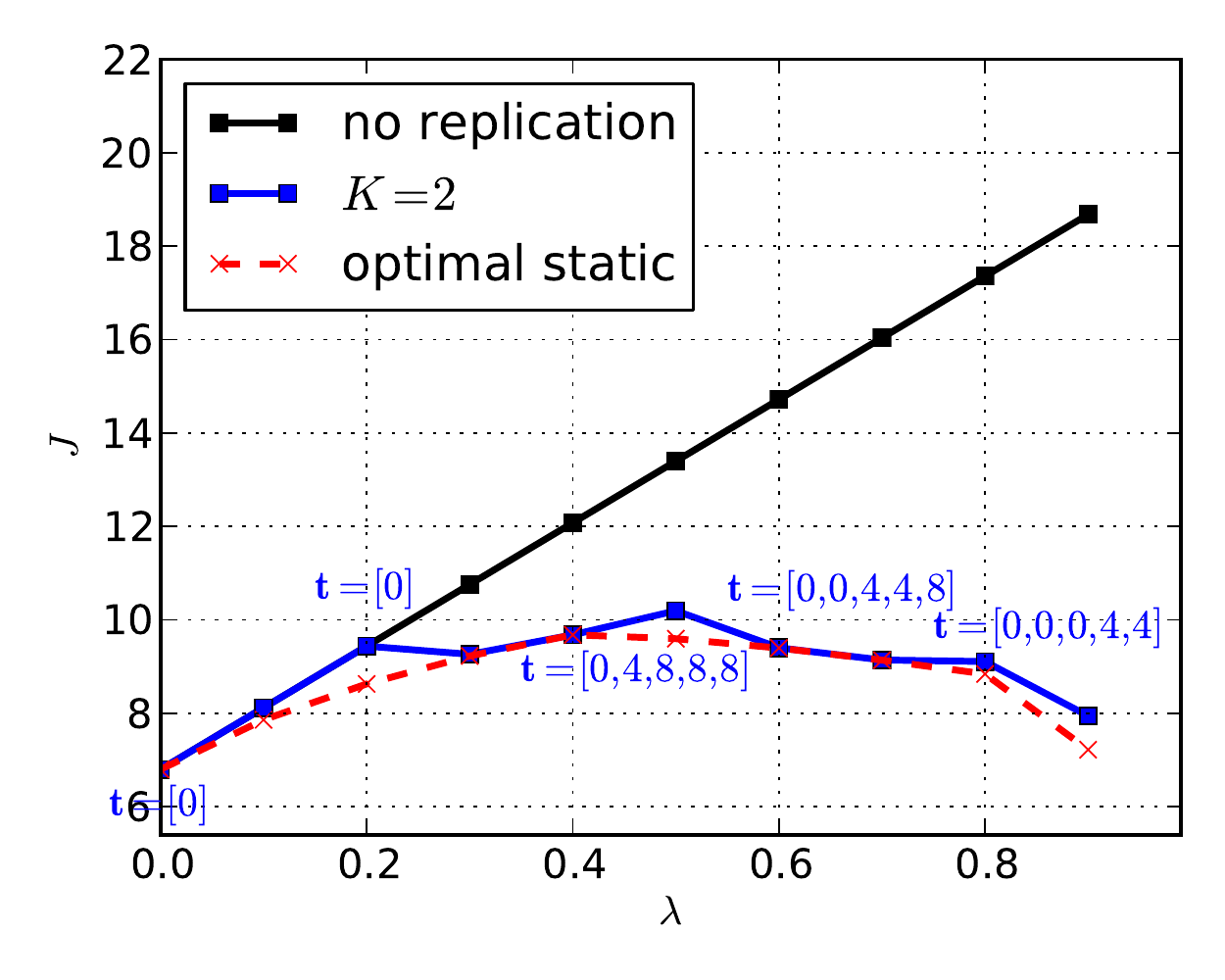}
        \caption{$N=100$ tasks}
    \end{subfigure}
    \end{center}
    \caption{Performance of scheduling policy based on heuristic search for
        execution time in \Cref{eq:X_L3_eg}. $k$ is the parameter in
        \Cref{algo:single_task_myopic}. 
        The starting time vector for $k=2$ and $\lambda = 0, 0.2, 0.4, 0.6$ and $0.8$ are
        labeled in the plots.
    }
    \label{fig:multi_task_eg1}
\end{figure}

\begin{remark}
    Our proposed policy via searching algorithm is static in nature. One may extend it by
    running the searching algorithm at each time instant.  For example, at time $t=0$ we obtain
    the starting time vector $t^{(0)}$.  At time $t=\alpha_1$ we can re-run the search
    algorithm given the number of unfinished tasks and obtain an updated starting time vector
    $t^{(1)}$, \etc. This policy is dynamic in nature and is likely to achieve better
    performance than the static policy. 
\end{remark}

\section{Proofs for single task scheduling}
\label{sec:proofs_single_task}
In this section we present the detailed analysis for the problem of optimal scheduling for a
single task.

\subsection{Proofs for \Cref{thm:static_optimal}}
\begin{proof}[Proof of \Cref{thm:static_optimal}]
    With node completion feedback, at any time $t$, a dynamic launching policy can make launching decision based on
    whether any running node finishes execution by then, and it is obvious that it should launch new copies only if no
    node has finished executing the task.

    Given a dynamic launching policy $\DLPolicy$, we can construct a static policy $\SLPolicy$ by letting its starting
    time vector be the starting time emitted from the dynamic policy under the condition that no machine finishes
    execution until $\alpha_l$. Now we claim $\DLPolicy$ and $\SLPolicy$ achieves the same $\E{T}$ and $\E{C}$, because
    it is not difficult to see that for a given realization of machine execution time, both policies will launch and
    terminate machines at the same time.
\end{proof}

\subsection{Proofs for \Cref{thm:linearity} and \Cref{thm:corner_points_optimal}}
In this section we show that the \EMTradeoff{} curve is always piecewise linear, with
the vertices of the piecewise linear curve corresponding to starting time vector that
satisfies certain properties.

We first define the possible finishing time 
\begin{equation*}
    w_{i, j} \defeq t_i + \alpha_j, 1 \leq i \leq m, 1 \leq j \leq l
\end{equation*}
and the set of all possible finishing times
\begin{align*}
    \cW \defeq \Set{
        w_{i, j}, 1 \leq i \leq m, 1 \leq j \leq l
    }
    .
\end{align*}
Let $k = \cardinality{\cW}$, 
we denote the sorted version of $\cW$ as 
$\bw = [w_{\sigma_\bt(1)}, w_{\sigma_\bt(2)}, \ldots, w_{\sigma_\bt(k)}]$ such that
\begin{align*}
    w_{\sigma_\bt(1)} \leq w_{\sigma_\bt(2)} \leq \ldots \leq w_{\sigma_\bt(k)}
    ,
\end{align*}
where $\sigma_\bt(k)$ maps the rank of the finishing time $k$ to a tuple $(i_k,j_k)$. 

Note that 
\begin{align*}
    T = \min_{1\leq i \leq m} X_i + t_i
    ,
\end{align*}
and $T \in \cW$, 
we define the event 
\begin{align*}
    \cA_{k_1, k_2} \defeq \Set{
        \min_{1 \leq i \leq m, 1 \leq j \leq l}^*\Set{t_i + X_j} = t_{k_1} + \alpha_{k_2} 
    }
    ,
\end{align*}
where $\min^*$ indicates we always choose the smallest $(k_1, k_2)$ (by lexicographic order in
$k_1$ and $k_2$) so that all
the events $\Set{\cA_{k_1, k_2}, 1 \leq k_1 \leq m, 1 \leq k_2 \leq l}$ are disjoint.

Therefore, 
\begin{align}
    \E{T} 
    &= \sum_{k_1, k_2} \CondE{T}{\cA_{k_1, k_2}} \Prob{\cA_{k_1, k_2}}
    \nn
    \\
    &= \sum_{k_1, k_2} (t_{k_1} + \alpha_{k_2}) \Prob{\cA_{k_1, k_2}}
    \label{eq:cond_ET}
    \\
    \E{C} 
    &= \sum_{j=1}^m \E{C_j}
    \nn
    \\
    &= \sum_{j=1}^m \sum_{k_1, k_2} \posfunc{ t_{k_1} + \alpha_{k_2} - t_j} \Prob{\cA_{k_1, k_2}}
    \label{eq:cond_EC}
\end{align}
To analyze \Cref{eq:cond_ET} and \Cref{eq:cond_EC}, 
we first show that the relative ordering of elements in $\cW$ determines 
$\Prob{\cA_{k_1, k_2}}$. 
\begin{lemma}
    \label{lemma:prob_ordering}
    $\Prob{\cA_{k_1, k_2}}$ is independent of $\Set{\alpha_j, 1 \leq l \leq l}$ given the
    relative ordering of elements in $\cW$,
    \ie,
    \begin{equation}
        \label{eq:cond_prob_determined_by_ordering}
        \CondProb{\cA_{k_1, k_2}}{\sigma_\bt} = f(\sigma_\bt, k_1, k_2, p_1, \ldots, p_l)
        ,
    \end{equation}
    where $f$ is some function.
\end{lemma}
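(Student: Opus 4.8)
The plan is to exploit the mutual independence of the machine execution times $X_1,\ldots,X_m$ to write $\Prob{\cA_{k_1,k_2}}$ as a product of single-machine factors, and then to verify that each factor reads off the ordering $\sigma_\bt$ rather than the numerical support values $\Set{\alpha_j}$.

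First I would condition on the realized value of the winning machine. The event $\cA_{k_1,k_2}$ entails $X_{k_1}=\alpha_{k_2}$, which has probability $p_{k_2}$ and is independent of the remaining machines. Given this, $\cA_{k_1,k_2}$ holds exactly when no other machine $i\neq k_1$ produces a finishing time $t_i+X_i$ that beats $w_{k_1,k_2}=t_{k_1}+\alpha_{k_2}$ under the $\min^*$ rule. Because the tie-break selects the lexicographically smallest index, a machine $i<k_1$ must finish strictly after $w_{k_1,k_2}$, whereas a machine $i>k_1$ need only finish at or after $w_{k_1,k_2}$. By independence these per-machine conditions factor, yielding
\[
\Prob{\cA_{k_1,k_2}} = p_{k_2}\prod_{i<k_1}\Big(\sum_{j:\,w_{i,j}>w_{k_1,k_2}}p_j\Big)\prod_{i>k_1}\Big(\sum_{j:\,w_{i,j}\geq w_{k_1,k_2}}p_j\Big).
\]

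The key step is then to observe that every comparison in this formula, each ``$w_{i,j}>w_{k_1,k_2}$'' or ``$w_{i,j}\geq w_{k_1,k_2}$,'' is decided purely by the rank of $w_{i,j}$ relative to that of $w_{k_1,k_2}$ in the sorted list $\bw$, together with the fixed index comparison $i$ versus $k_1$. No step consults the magnitudes $\alpha_j$ beyond what $\sigma_\bt$ already encodes. Consequently each inner sum, and hence the entire product, is a function of $\sigma_\bt$, $k_1$, $k_2$ and $p_1,\ldots,p_l$ alone, which is precisely the asserted form of $f$ in \Cref{eq:cond_prob_determined_by_ordering}.

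I expect the main subtlety to be the handling of ties among finishing times, since two distinct pairs can satisfy $w_{i,j}=w_{k_1,k_2}$. Here I must confirm that the strict-versus-weak distinction dictated by the lexicographic $\min^*$ rule is faithfully captured by knowing both the weak ordering $\sigma_\bt$ (which records equalities, not merely strict precedences) and the index labels $i$ versus $k_1$; once this is verified the factorization above is exact and the lemma follows. A minor point worth stating explicitly is that the factorization's use of independence is legitimate because conditioning on $X_{k_1}=\alpha_{k_2}$ leaves the remaining $X_i$ jointly independent with their original marginals.
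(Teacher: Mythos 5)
Your proof is correct and takes essentially the same route as the paper's: factor the event $\cA_{k_1,k_2}$ over machines using independence of the $X_i$, then observe that each per-machine factor is a sum of $p_j$'s that is determined by the rank of $w_{i,j}$ relative to $w_{k_1,k_2}$ under $\sigma_\bt$, not by the magnitudes $\alpha_j$. In fact your write-up is more careful than the paper's on two points the paper glosses over: the paper's factorization omits the factor $p_{k_2}$ for the winning machine's own execution time, and it writes strict inequalities for all $j \neq k_1$, whereas your strict-versus-weak split across $i < k_1$ and $i > k_1$ correctly implements the lexicographic $\min^*$ tie-breaking rule.
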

\begin{proof}
    $\cA_{k_1, k_2}$ indicates machine $k_1$ is the first machine that finishes execution, and it
    finishes execution after running for $\alpha_{k_2}$. 
    Define $k \defeq \sigma_\bt^{-1}(k_1, k_2)$, \ie, 
    \begin{equation*}
        t_{k_1} + \alpha_{k_2} = w_{\sigma_\bt(k)},
    \end{equation*}
    then
    \begin{align}
        \CondProb{\cA_{k_1, k_2}}{\sigma_\bt} 
        &= 
        \CondProb{ \bigcap_{j \neq k_1} \Set{t_j + X_j} > w_{k_1, k_2} }{\sigma_\bt}
        \nn
        \\
        &= 
        \prod_{j \neq k_1}
        \CondProb{ t_j + X_j > w_{k_1, k_2} }{\sigma_\bt}
        \label{eq:P_A_k}
        .
    \end{align}
    Define 
    \begin{equation*}
        \cP_j \defeq \SetDef{p}{\sigma_\bt(i) = (j, p), i > k}
        ,
    \end{equation*}
    which is uniquely determined by $\sigma_\bt$ and $k$, then for any $i \neq k_1$, 
    \begin{equation}
        q_j 
        \defeq 
        \CondProb{ t_j + X_j > w_{k_1, k_2} }{\sigma_\bt}
        =
        \sum_{p \in \cP_j} p
        \label{eq:q_j}
        ,
    \end{equation}
    which is a function of $k$, $\sigma_t$ and $\bp = [p_1, p_2, \ldots, p_m]$.

    Combing \Cref{eq:P_A_k} and \Cref{eq:q_j}, we have
    \Cref{eq:cond_prob_determined_by_ordering}.
\end{proof}

\begin{proof}[Proof of \Cref{thm:linearity}]
    We prove that there exists finitely many subspaces of $[0, \alpha_l]^m$ such that in each subspace, 
    $\E{T}$ and $\E{C}$ is a linear function in $\bt$, and thus they are piecewise linear in $\bt$ on
    $[0, \alpha_l]^m$.

    Define $\cB_1(\sigma) \defeq \SetDef{\bt}{\sigma_\bt = \sigma}$, and $(k_1, k_2) = \sigma(k)$, 
    then the set 
    \begin{align}
        \label{eq:B1_def}
        \cB_1(\sigma) &= \SetDef{\bt}{
            t_{1_1} + \alpha_{1_2} \leq t_{2_2} + \alpha_{2_2} \leq 
            \ldots 
            \leq t_{k_1} + \alpha_{k_2}
        }
    \end{align}
    is defined by $k-1$ inequalities, and each of this inequality partition the space 
    $[0, \alpha_l]^n$ into two subspaces. Therefore, $\cB_1(\sigma)$ is the intersection of $k-1$
    connected subspaces, resulting itself being a subspace of $[0, \alpha_l]^m$.
    And it is obvious that there are only finitely many such subspaces.
    Therefore, by \Cref{lemma:prob_ordering} and \Cref{eq:cond_ET}, 
    in each subspace $\cB_1(\sigma)$, $\E{T}$ is a linear function in $\bt$.

    Regarding $\E{C}$, we define
    \begin{align}
        &\cB_2(\bb = [b_{i,j}]_{1\leq i\leq m, 1 \leq j \leq k}, \sigma) \subset \cB_1(\sigma)
        \nn
        \\
        \defeq&
        \big\{\bt:
            \sigma_\bt = \sigma, \;
            \indicator{t_{k_1} + \alpha_{k_2} - t_i > 0} \in \Set{0,1},
        \nn
        \\
            &\qquad (k_1, k_2) = \sigma^{-1}(j), 1 \leq j \leq k
        \big\},
        \label{eq:B2_def}
    \end{align}
    where $\indicator{\cdot}$ is the indicator function. Similar to the argument above, 
    given $\sigma$ and $\bb$, $\cB_2(\bb, \sigma)$ 
    corresponds to a subspace of $[0, \alpha_l]^m$ and there are only finitely many such subspaces. 
    By \Cref{lemma:prob_ordering} and \Cref{eq:cond_EC}, 
    in each subspace $\cB_2(\bb, \sigma)$, $\E{C}$ is a linear function in $\bt$.

    Therefore, both $\E{T}$ and $\E{C}$ are piecewise linear functions of $\bt$ in 
    $[0, \alpha_l]^m$.
\end{proof}

\begin{proof}[Proof of \Cref{thm:corner_points_optimal}]
    By \Cref{thm:linearity} and the fact that $J_\lambda$ is a linear combination of $\E{T}$
    and $\E{C}$, the optimal $\bt^*$ that minimizes $J_\lambda(\bt)$ is at 
    the boundaries of two or more subspaces defined in \Cref{eq:B2_def}.
        
    Then by \Cref{eq:B1_def,eq:B2_def}, it is not hard to see that 
    for some $j_1, j_2, j_3, j_4$ and $l_1, l_2, l_3$, we have
    \begin{align*}
        t^*_{j_1} - t^*_{j_2} &= \alpha_{l_1} - \alpha_{l_2}
        \\
        t^*_{j_3} - t^*_{j_4} &= \alpha_{l_3}
        .
    \end{align*}
    Then it is not hard to see that given $m$, $\bt = [t_1, t_2, \ldots, t_m]$, 
    and without loss of generality, let $t_1 = 0$, 
    \begin{equation*}
        t^*_i \in \cV_m,
    \end{equation*}
    where $\cV_m$ is defined in \Cref{eq:corner_pts}, \ie, 
    \begin{equation*}
        \cV_m \defeq \SetDef{v}
        {
            v = \sum_{j=1}^l \alpha_{j} w_{j},
            0 \leq v \leq \alpha_l,
            \sum_{j=1}^l \abs{w_j} \leq m,
            w_{j} \in \ints
        }
        .
    \end{equation*}
    Note that an element in $\cV_m$ is uniquely determined by $\bw = [w_1, \ldots, w_l]$, and
    the number of possible $\bw$ is 
    \begin{equation*}
        2^l \nchoosek{m+l-1}{l-1}.
    \end{equation*}
    Therefore, 
    \begin{equation*}
        \cardinality{\cV_m} \leq 2^l \nchoosek{m+l-1}{l-1} \leq [2(m+l-1)]^{l},
    \end{equation*}
    which is finite given finite $m$ and $l$.
\end{proof}

\subsection{Proofs related to corner points}

\begin{proof}[Proof of \Cref{thm:corner_pt_optimal}]
    Let $U_{i+1} = [u_1, u_2, \ldots, u_{k_i}]$ be the sorted version of $\cU_{i+1}$, then 
    $\E{T(\bt')}$ and $\E{C(\bt')}$ are linear in $t_{i+1}$ over the each interval
    $[u_j, u_{j+1}], 1 \leq j \leq k_i-1$. 
    Therefore, the optimal $t_{i+1} \in \cU_{i+1}$.
\end{proof}

\subsection{Proof of \Cref{lem:last_corner_sub_optimal}}
\begin{proof}[Proof of \Cref{lem:last_corner_sub_optimal}]
Consider a set of $m$ machines on which we run the task according to the policy $\pi = [t_1,
t_2, \cdots, t_m]$. Without loss of generality, we assume $t_1 = 0$. 
If the starting time of a machine is $\alpha_l > t_j \geq
\alpha_l - \alpha_1$, the earliest time it can finish execution of the task is $t + \alpha_1$.
This time is greater than $\alpha_l$, the latest time at which the first machine started at
time $t_1 = 0$ finishes the task. Thus, starting the machine at time $t_j$ only adds to the
cost $\E{C}$, without reducing the completion time $\E{T}$. Hence, any starting time
$t_j \geq \alpha_l - \alpha_1$ should be replaced by $\alpha_l$, which corresponds to not using
that machine at all.  
\end{proof}

\subsection{Proof of \Cref{thm:bimodal_2m_tradeoff}}
\label{sec:proof_bimodal_2m_tradeoff}
\Cref{thm:bimodal_2m_tradeoff} follows directly from the following lemma.
\begin{lemma}
    \label{lemma:bimodal_2m_exprs}
    Given $P_X$ is a bimodal distribution and we have at most two machines, 
    the expected completion time and total cost satisfies that if 
    $t_2 + \alpha_1 < \alpha_2$,
        \begin{align*}
            \E{T} &= \alpha_1 (p_2 - p_1) p_1 + \alpha_2 p_2^2 + t_2 p_1 p_2
            ,
            \\
            \E{C} &= \begin{cases} 
                2\E{T} - t_2 (p_1^2 + p_2^2) & \text{ if } t_2 < \alpha_1
                \\
                2\E{T} - \alpha_1 p_1 - t_2 p_2 & \text{ if } t_2 \geq \alpha_1
            \end{cases}
        ;
        \end{align*}
    otherwise if $ t_2 + \alpha_ 1 \geq \alpha_2$,
        \begin{align*}
            \E{T} &= \alpha_1 p_1 + \alpha_2 p_2
            \\
            \E{T} &= 2\E{T} - \alpha_1 p_1 - t_2 p_2
        \end{align*}
\end{lemma}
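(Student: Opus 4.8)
The plan is to compute $\E{T}$ and $\E{C}$ directly by enumerating the four realizations of the two machine running times and summing, exactly as in the general expansions \Cref{eq:cond_ET} and \Cref{eq:cond_EC} specialized to $m=2$ machines and an $l=2$ point distribution. Writing $X_1, X_2 \beiid P_X$ for the (uncensored) running times of the two copies and using $t_1 = 0$, the completion time is $T = \min\Set{X_1, t_2 + X_2}$, and since $T \geq 0$ the copy started at $t_1 = 0$ always contributes exactly $T$, so the cost is $C = \posfunc{T - t_1} + \posfunc{T - t_2} = T + \posfunc{T - t_2}$. A useful observation is that this single expression already encodes the conditional launching of the second machine: when machine $1$ finishes before $t_2$ (so $T = X_1 < t_2$) the term $\posfunc{T - t_2}$ vanishes, correctly charging nothing for a copy that was never started.

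The realizations $(X_1, X_2)$ take the four values in $\Set{\alpha_1, \alpha_2}^2$ with probabilities $p_1^2, p_1 p_2, p_2 p_1, p_2^2$, and $T$ in each realization is determined solely by the relative order of the four candidate finishing times $\alpha_1$, $\alpha_2$, $t_2 + \alpha_1$, $t_2 + \alpha_2$; this is precisely \Cref{lemma:prob_ordering} specialized to the present setting. Because $\alpha_1 < \alpha_2$ and $t_2 \geq 0$ are fixed, that order is pinned down by two comparisons: whether $t_2 + \alpha_1 < \alpha_2$ (which splits the lemma into its two main cases) and, for the cost, whether $t_2 < \alpha_1$ (which decides whether the fast copy on machine $1$ preempts the launch of machine $2$).

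First I would treat $t_2 + \alpha_1 < \alpha_2$, where the order is $\alpha_1 < t_2 + \alpha_1 < \alpha_2 < t_2 + \alpha_2$; the only realization in which machine $2$ wins is $(X_1, X_2) = (\alpha_2, \alpha_1)$, giving $T = t_2 + \alpha_1$, while all others are won by machine $1$. Summing $(t_{k_1} + \alpha_{k_2})\Prob{\cdot}$ and collapsing with $p_1 + p_2 = 1$ yields the claimed $\E{T}$; the cost then follows by adding $\posfunc{T - t_2}$ realization-by-realization, where the split on whether $t_2 < \alpha_1$ produces the two reported branches. For the complementary case $t_2 + \alpha_1 \geq \alpha_2$, the earliest possible finish of machine $2$ is no sooner than the latest finish of machine $1$, so machine $1$ always determines the completion time and $T = X_1$; this immediately gives $\E{T} = \alpha_1 p_1 + \alpha_2 p_2$, and the cost is obtained the same way by tallying the residual running time $\posfunc{X_1 - t_2}$ of the second copy.

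The whole argument is elementary; the only real work — and the only place an error can creep in — is the casework bookkeeping for when machine $2$ launches and the subsequent algebraic simplification to the stated closed forms. To guard against this I would sanity-check each branch against the motivating example of \Cref{sec:motivating_eg} ($\alpha_1 = 2$, $\alpha_2 = 7$, $p_1 = 0.9$, $t_2 = 2$), which must reproduce $\E{T} = 2.23$ and $\E{C} = 2.46$; rewriting the $t_2$-independent part of $\E{T}$ as $\alpha_1(1 - p_2^2) + \alpha_2 p_2^2$ makes this check transparent and guards against sign slips in the reported coefficients.
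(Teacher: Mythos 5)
Your approach is exactly the paper's: its entire proof of this lemma reads ``By \Cref{eq:single_task_C_def,eq:single_task_T_def} and calculation,'' and your proposal is simply that calculation carried out explicitly --- enumerate the four realizations of $(X_1,X_2)$, use $T=\min\Set{X_1,\,t_2+X_2}$ and $C=T+\posfunc{T-t_2}$, and split on $t_2+\alpha_1<\alpha_2$ and on $t_2<\alpha_1$. Your case structure is also correct: machine 2 determines $T$ only on the realization $(\alpha_2,\alpha_1)$ in the first case, and $T=X_1$ identically in the second.

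However, one claim in your write-up cannot stand as written, and your own proposed sanity check exposes it: the summation does \emph{not} ``yield the claimed'' formulas, because the printed formulas contain errors. Your computation gives $\E{T}=\alpha_1 p_1(1+p_2)+\alpha_2 p_2^2+t_2 p_1 p_2=\alpha_1(1-p_2^2)+\alpha_2 p_2^2+t_2 p_1 p_2$ (your rewritten form), whereas the lemma's coefficient $(p_2-p_1)$ is a misprint for $(2-p_1)=1+p_2$; at the parameters of \Cref{sec:motivating_eg} the printed expression is negative (about $-1.19$), while the true value is $2.23$. Similarly, when $t_2<\alpha_1$ the second machine is always launched (every realization has $T\geq\alpha_1>t_2$), so $C=2T-t_2$ holds deterministically and $\E{C}=2\E{T}-t_2$; the printed coefficient $p_1^2+p_2^2$ on $t_2$ is wrong. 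The branch $t_2\geq\alpha_1$, namely $\E{C}=2\E{T}-\alpha_1 p_1-t_2 p_2$, does agree with your enumeration (and reproduces $2.46$ on the example), as does the case $t_2+\alpha_1\geq\alpha_2$ --- modulo the evident misprint of $\E{T}$ for $\E{C}$ there, and noting that its cost formula implicitly assumes $t_2\geq\alpha_1$, which can fail when $\alpha_2<2\alpha_1$. So your method is sound and is the paper's method, but you should present the computation as a correction of the lemma's constants rather than as a verification of them.
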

\begin{proof}
    By \Cref{eq:single_task_C_def,eq:single_task_T_def}
    and calculation.
\end{proof}

\subsection{Proof of \Cref{thm:bimodal_2_mach_opt_policy}}
\label{sec:bimodal_2mach}

\begin{proof}[Proof of \Cref{thm:bimodal_2_mach_opt_policy}]

\begin{enumerate}[(a)] 
\item Follows from \Cref{lem:last_corner_sub_optimal}. 
\item If $\frac{\alpha_1}{\alpha_2} >\frac{1}{2}$ then by \Cref{lem:last_corner_sub_optimal} we know that if $[0,\alpha_1]$ is suboptimal. Now suppose $\frac{\alpha_1}{\alpha_2} >\frac{1}{2}$. We know that $ \pi_1 = [0, 0]$ and $ \pi_2 = [0, \alpha_2]$ are the two extreme ends of the $(\E{C}, \E{T})$ trade-off. If the line joining points $(\E{C(\pi_1)}, \E{T(\pi_1)})$ and $(\E{C(\pi_2)}, \E{T(\pi_2)})$ lies below $(\E{C(\pi)}, \E{T(\pi)})$, then $\pi = [0, \alpha_1]$ will be suboptimal. Comparing the slopes of the lines gives the condition $\frac{\alpha_1}{\alpha_2} > \frac{p_1}{1+p_1}$. 

\item Policy $\pi_2 = [0, \alpha_2]$ is suboptimal if it is dominated by either $\pi = [0, \alpha_1]$ or $\pi_1 = [0,
    0]$. Both $\pi$ and $\pi_1$ give lower expected execution time $\E{T}$ than $[0, \alpha_2]$. So if one of them has
    expected cost $\E{C}$ lower than $[0, \alpha_2]$, then it follows that $[0, \alpha_2]$ is dominated by that
    strategy. But the $\E{C}$ with starting time vector $[0, 0]$ is always greater than
    that of $[0, \alpha_1]$. Thus, checking if the expected machine $\E{C}$ with $[0, \alpha_1]$ is smaller than that
    for $[0, \alpha_2]$, gives the condition $\frac{\alpha_1}{\alpha_2} < \frac{2p_1-1}{4p_1-1}$ for suboptimality of
    $[0, \alpha_2]$.
\item, (e), (f) 
    For cost function $J = \lambda \E{T} + (1-\lambda) \E{C}$, the constant cost contour is a line with slope
    $-\frac{1-\lambda}{\lambda}$. As we increase $J$, the contour line shifts upward until it hits the $(\E{C}, \E{T})$
    trade-off. The point where it meets the $(\E{C}, \E{T})$ trade-off corresponds to the optimal policy. In $\cR_1$,
    policy $\pi_2 = [0,\alpha_2]$ is optimal if the slope of the line joining $(\E{C(\pi_1)}, \E{T(\pi_1)})$ and
    $(\E{C(\pi_2)}, \E{T(\pi_2)})$ is less than or equal to $-\frac{1-\lambda}{\lambda}$. We can simplify and show that
    the slope of the line is $-\tau_1$. The result follows from this. Similarly, the slope of the line joining $[0,0]$,
    $[0,\alpha_1]$ is $-\tau_2$, and that of the line joining $[0,\alpha_1$ and $[0,\alpha_2]$ is $-\tau_3$. Comparing
    the slope of the contour, $-\frac{1-\lambda}{\lambda}$ with these slopes gives the conditions of optimality for each
    of the policies.

\end{enumerate}
\end{proof}

\section{Proofs for multi-task scheduling}
\label{sec:proofs_multi_task}
\subsection{Proof of \Cref{thm:separation_suboptimal}}
\begin{proof}
    We prove the statement by showing an example that a scheduling policy that takes the
    interaction of task latencies into account (joint policy) is better than a scheduling
    each task independently (separate policy).

Suppose we have two tasks and $4$ machines. The service time distribution of 
each machine is bimodal, taking values $\alpha_1$ and $\alpha_2 > \alpha_1$ with
probability $p_1$ and $1-p_1$ respectively. Assume $2 \alpha_1  < \alpha_2$.

\emph{Separate Policy} \newline
Consider a policy $\pi_s$ where we choose the optimal scheduling policy separately for each
task.  We can follow the analysis of the bimodal 2-machine case in \Cref{sec:bimodal_2mach} as a
guideline to choose the optimal policy for each task. 

Suppose the policy $[ 0, \alpha_2]$ is optimal for a given cost function. For this to be true, the
parameters $\alpha_1$, $\alpha_2$ and $p_1$ need to satisfy,
\begin{align}
\frac{\alpha_1}{\alpha_2} > \frac{2p_1 - 1}{4 p_1 - 1}
\end{align} 

If we run each task on two machines using the policy $[0, \alpha_2]$, the expected completion time 
and cost are,
\begin{align*}
    \E{\fLatency{\pi_s}} &= p_1^2 \alpha_1 + (1-p_1^2) \alpha_2, \\
    \E{\fCloudCost{\pi_s}} &=2  p_1^2 \alpha_1 + 2 p_1 (1-  p_1) (\alpha_1 + \alpha_2) + 2 (1-p_1^2) \alpha_2 .
\end{align*}

\emph{Joint Policy} \newline
Consider a joint policy $\pi_d$ where we start with each task according to policy $[ 0, \alpha_2]$. If task 1 (task 2)
is served by its machine at time $\alpha_1$, we start the execution the task 2 (task 1) on an additional machine
at time $\alpha_1$. 

Using this joint policy the performance metrics are given by
\begin{align*}
\E{T(\pi_d)} 
&= p_1^2 \alpha_1 + 2p_1^2 (1-p_1) (2 \alpha_1) + (1-p_1)^2(2p_1 +1) \alpha_2,\\
\E{C(\pi_d)} 
&=p_1^2 (2\alpha_1) + 2 p_1^2 (1-  p_1) (3\alpha_1) + (1-p_1)^2(2p_1 +1) (2\alpha_2).
\end{align*}

We can show that for $2 \alpha_1 < \alpha_2$, $\E{T(\pi_d)} < \E{T(\pi_s)}$. Now let us find the condition
for $\E{C(\pi_d)} < \E{C(\pi_s)}$.
\begin{align*}
\E{C(\pi_d)} &< \E{C(\pi_s)} \\
\Rightarrow \qquad \frac{\alpha_1}{\alpha_2} &< \frac{2 p_1 - 1}{3 p_1 -1}
.
\end{align*}
Thus, the joint policy gives strictly lower cost $J_{\lambda} = \lambda \E{C} + (1-\lambda) \E{T}$ than the
separate policy for any $\lambda$ if
\begin{align}
\frac{2 p_1 - 1}{4 p_1 -1} < \frac{\alpha_1}{\alpha_2} < \frac{2 p_1 - 1}{3 p_1 -1}
.
\end{align}
\end{proof}

\section{Concluding Remarks}
\label{sec:conclu}

In this paper we present the first theoretical analysis of how to effective replicate tasks
such that we reduce completion time, with minimum use of extra computing resources. 

We show that for certain scenarios, task replication may in fact simultaneously reduce
both execution time and resource usage, and in general, it leads to a
better response time and computing resource usage trade-off.

Given a discrete approximation to the service time distribution, we characterize the optimal
trade-off between execution time and resource usage for the case of scheduling a single task. We
show the optimal scheduling policy is in a set of finite size.  We also present a
low-complexity heuristic algorithm to choose the scheduling policy that is close to optimal.
Further, we give insights into extending this analysis to the multi-task case.

Our work answers the questions on when and how task replication helps, and our results provide
guidance to scheduling design in data centers, such as the time to launch tasks and the number of time
we should replicate it.

This work can be extended in a few directions.  First, one can search for better scheduling
policies, especially for the multi-task case.  Second, in our work we assume the execution time
distribution is given or can be estimated, it may be of interest to develop an adaptive
scheduling policy that does not require such knowledge.  Third, it will be useful to estimate
the error due to approximating a continuous execution time distribution by a discrete execution
time distribution, either numerically or via simulation. Finally, one can take the effect of
queueing of requests at the machines into account and see how that impacts the system
performance.

%======================= Bibliography =======================%

\bigskip
\subsection*{Acknowledgement}
We thank Devavrat Shah for helpful discussions.
\bigskip

\bibliographystyle{abbrv}
\bibliography{dwabrv,reference,websites}

\begin{thebibliography}{10}

\bibitem{ananthanarayanan_effective_2013}
G.~Ananthanarayanan, A.~Ghodsi, S.~Shenker, and I.~Stoica.
\newblock Effective straggler mitigation: Attack of the clones.
\newblock {NSDI'13}, pages 185--198, Berkeley, {CA}, 2013. {USENIX}.

\bibitem{ananthanarayanan_reining_2010}
G.~Ananthanarayanan, S.~Kandula, A.~Greenberg, I.~Stoica, Y.~Lu, B.~Saha, and
  E.~Harris.
\newblock Reining in the outliers in map-reduce clusters using mantri.
\newblock {OSDI'10}, pages 1--16, Berkeley, {CA}, 2010. {USENIX}.

\bibitem{boyd_distributed_2011}
S.~Boyd, N.~Parikh, E.~Chu, B.~Peleato, and J.~Eckstein.
\newblock Distributed optimization and statistical learning via the alternating
  direction method of multipliers.
\newblock {\em Found. Trends Mach. Learn.}, 3(1):1--122, 2011.

\bibitem{chang_bigtable:_2008}
F.~Chang, J.~Dean, S.~Ghemawat, W.~C. Hsieh, D.~A. Wallach, M.~Burrows,
  T.~Chandra, A.~Fikes, and R.~E. Gruber.
\newblock Bigtable: A distributed storage system for structured data.
\newblock {\em {ACM} Transactions on Computer Systems {(TOCS)}}, 26(2):4, 2008.

\bibitem{chen_analysis_2010}
Y.~Chen, A.~S. Ganapathi, R.~Griffith, and R.~H. Katz.
\newblock Analysis and lessons from a publicly available google cluster trace.
\newblock {\em {EECS} Department, University of California, Berkeley, Tech.
  Rep. {UCB/EECS-2010-95}}, 2010.

\bibitem{cirne_efficacy_2007}
W.~Cirne, F.~Brasileiro, D.~Paranhos, {L}u\'is~{F}abr\'icio W.~{G}\'{o}es, and
  W.~Voorsluys.
\newblock On the efficacy, efficiency and emergent behavior of task replication
  in large distributed systems.
\newblock {\em Parallel Computing}, 33(3):213--234, 2007.

\bibitem{dean_achieving_2012}
J.~Dean.
\newblock Achieving rapid response times in large online services.
\newblock Online at http://research.google.com/people/jeff/latency.html, 2012.

\bibitem{dean_tail_2013}
J.~Dean and L.~A. Barroso.
\newblock The tail at scale.
\newblock {\em Commun. ACM}, 56(2):74--80, 2013.

\bibitem{dean_mapreduce:_2008}
J.~Dean and S.~Ghemawat.
\newblock {MapReduce:} simplified data processing on large clusters.
\newblock {\em Commun. ACM}, 51(1):107--113, 2008.

\bibitem{ghare_improving_2005}
G.~D. Ghare and S.~T. Leutenegger.
\newblock Improving speedup and response times by replicating parallel programs
  on a {SNOW}.
\newblock In {\em Job Scheduling Strategies for Parallel Processing}, pages
  264--287. Springer Berlin Heidelberg, Jan. 2005.

\bibitem{ghemawat_google_2003}
S.~Ghemawat, H.~Gobioff, and S.~Leung.
\newblock The google file system.
\newblock In {\em {ACM} {SIGOPS} Operating Systems Review}, {SOSP} '03, pages
  29--43, New York, {NY,} {USA}, 2003. {ACM}.

\bibitem{joshi_coding_2012}
G.~Joshi, Y.~Liu, and E.~Soljanin.
\newblock Coding for fast content download.
\newblock In {\em Proc. Annu. Allerton Conf. on Commu. Control. \& Comput.},
  pages 326--333, Oct. 2012.

\bibitem{joshi_delay-storage_2013}
G.~Joshi, Y.~Liu, and E.~Soljanin.
\newblock On the {Delay-Storage} trade-off in content download from coded
  distributed storage systems.
\newblock {\em {arXiv:1305.3945} [cs, math]}, May 2013.

\bibitem{neiswanger_asymptotically_2013}
W.~Neiswanger, C.~Wang, and E.~Xing.
\newblock Asymptotically exact, embarrassingly parallel {MCMC}.
\newblock {\em {arXiv:1311.4780} [cs, stat]}, Nov. 2013.

\bibitem{peng_large-scale_2010}
D.~Peng and F.~Dabek.
\newblock Large-scale incremental processing using distributed transactions and
  notifications.
\newblock {OSDI'10}, pages 1--15, Berkeley, {CA}, 2010. {USENIX}.

\bibitem{pinedo_scheduling:_2012}
M.~L. Pinedo.
\newblock {\em Scheduling: Theory, Algorithms, and Systems}.
\newblock Springer, Jan. 2012.

\bibitem{shah_when_2013}
N.~B. Shah, K.~Lee, and K.~Ramchandran.
\newblock When do redundant requests reduce latency?
\newblock {\em {arXiv:1311.2851} [cs]}, Nov. 2013.

\bibitem{vulimiri_low_2013}
A.~Vulimiri, P.~B. Godfrey, R.~Mittal, J.~Sherry, S.~Ratnasamy, and S.~Shenker.
\newblock Low latency via redundancy.
\newblock {\em {arXiv:1306.3707} [cs]}, June 2013.

\bibitem{wikipedia_embarrassingly_2013}
Wikipedia.
\newblock Embarrassingly parallel --- {{W}ikipedia{,}} the free encyclopedia,
  Feb. 2013.

\bibitem{zaharia_improving_2008}
M.~Zaharia, A.~Konwinski, A.~D. Joseph, R.~Katz, and I.~Stoica.
\newblock Improving {MapReduce} performance in heterogeneous environments.
\newblock {OSDI'08}, pages 29--42, Berkeley, {CA}, 2008. {USENIX}.

\end{thebibliography}

\end{document}